\newtheorem{theorem}{Theorem}[section]
\newtheorem{lemma}[theorem]{Lemma}
\newtheorem{remark}{Remark}
\newcommand{\al}{\alpha}
\newcommand{\bt}{\beta}
\newcommand{\la}{\lambda}
\newcommand{\s}{\sigma}
\newcommand{\be}{\begin{equation}}
\newcommand{\ee}{\end{equation}}
\newcommand{\bea}{\begin{eqnarray}}
\newcommand{\eea}{\end{eqnarray}}
\newcommand{\no}{\nonumber}
\numberwithin{equation}{section}
\begin{document}
\title{\Large Asymptotics of the Hankel determinant and orthogonal polynomials arising from the information theory of MIMO systems}
\author{Chao Min\thanks{School of Mathematical Sciences, Huaqiao University, Quanzhou 362021, China; Email: chaomin@hqu.edu.cn}\: and Xiaoqing Wu\thanks{School of Mathematical Sciences, Huaqiao University, Quanzhou 362021, China}}


\date{\today}
\maketitle
\begin{abstract}
We consider the Hankel determinant and orthogonal polynomials with respect to the deformed Laguerre weight
$
w(x; t) = {x^\alpha }{\mathrm e^{ - x}}{(x + t)^\lambda },\; x\in \mathbb{R}^{+}
$
with parameters $\alpha > -1,\; t > 0$ and $\lambda \in \mathbb{R}$. This problem originates from the information theory of single-user multiple-input multiple-output (MIMO) systems studied by Chen and McKay [{\em IEEE Trans. Inf. Theory} {\bf 58} ({2012}) {4594--4634}]. By using the ladder operators for orthogonal polynomials with general Laguerre-type weights, we obtain a system of difference equations and a system of differential-difference equations for the recurrence coefficients $\alpha_n(t)$ and $\beta_n(t)$. We also show that the orthogonal polynomials satisfy a second-order ordinary differential equation. By using Dyson's Coulomb fluid approach, we obtain the large $n$ asymptotic expansions of the recurrence coefficients $\alpha_n(t)$ and $\beta_n(t)$, the sub-leading coefficient $\mathrm p(n, t)$ of the monic orthogonal polynomials,  the Hankel determinant $D_n(t)$ and the normalized constant $h_n(t)$ for fixed $t\in\mathbb{R}^{+}$.  We also discuss the long-time asymptotics of these quantities as $t\rightarrow\infty$ for fixed $n\in\mathbb{N}$. The large $n$ and large $t$ asymptotics of the above quantities are very important for the study of the asymptotics of the mutual information distribution and two fundamental quantities (the outage capacity and the error probability) for single-user MIMO systems.
\end{abstract}

$\mathbf{Keywords}$:  Deformed Laguerre weight; Hankel determinant; Orthogonal polynomials;

Ladder operators; Coulomb fluid; Large $n$ asymptotics; Long-time asymptotics.

$\mathbf{Mathematics\:\: Subject\:\: Classification\:\: 2020}$: 42C05, 33C45, 41A60.

\section{Introduction}
In the important work of Chen and McKay \cite{cm}, they studied the outage capacity and the error probability of multiple-input multiple-output (MIMO) antenna wireless communication systems.  It was found that the key quantity for studying the outage capacity and the error probability is the moment generating function of the mutual information of multiantenna Gaussian channels, which in turn gives rise to the Hankel determinant generated by the deformed Laguerre weight for single-user MIMO systems:
$$
D_{n}[w]=D_{n}(t):=\det(\mu_{i+j}(t))_{i,j=0}^{n-1},
$$
where $\mu_j(t)$ is the $j$th moment given by
$$
\mu_j(t):=\int_{0}^{\infty}x^jw(x;t)dx,\qquad j=0,1,2,\dots
$$
and
\begin{equation}\label{wei}
w(x;t) = {x^\alpha }{\mathrm e^{ - x}}{(x + t)^\lambda }, \qquad x\in \mathbb{R}^{+},
\end{equation}
with parameters $\alpha > 0,\; t > 0$ and $\lambda \in \mathbb{R}$. When $\la=0$, the above weight (\ref{wei}) is reduced to the classical Laguerre weight $w_0(x) = {x^\alpha }{\mathrm e^{ - x}},\; x\in \mathbb{R}^{+},$ and the corresponding Hankel determinant is denoted by $D_n[w_0]$.

By using the ladder operator approach \cite{Chen1997} for the monic orthogonal polynomials with respect to the weight (\ref{wei}), Chen and McKay \cite{cm} found that the logarithmic derivative of the Hankel determinant $D_{n}(t)$ satisfies the Jimbo-Miwa-Okamoto $\s$-form of the Painlev\'{e} V equation. Later, Basor and Chen \cite{Basora} proved that the diagonal recurrence coefficient, with a suitable change of variables, satisfies a particular Painlev\'{e} V equation.

It is a well-known fact that the Hankel determinant  $D_n(t)$ can be expressed as the product of the normalized constants for the monic orthogonal polynomials with respect to the weight function (\ref{wei}) \cite[(2.1.6)]{Ismail}
\begin{equation}\label{d_nt1}
{D_n}(t)=\prod_{j=0}^{n-1} {{h_j}(t)},
\end{equation}
where $h_j(t)>0$ is defined by the orthogonality condition
\begin{equation}\label{h_j}
h_j(t)\delta_{jk}=\int_{0}^{\infty}P_j(x;t)P_k(x;t)w(x;t)dx,\qquad j, k=0, 1, 2, \ldots.
\end{equation}
Here, $\delta_{jk}$ is the Kronecker delta, and $P_j(x; t)$ is the monic orthogonal polynomial of degree $j$ and it has the expansion
$$
P_n(x; t)=x^n+\mathrm{p}(n, t) x^{n-1}+\cdots,
$$
where $\mathrm{p}(n, t)$ is the sub-leading coefficient of $P_n(x; t)$ with the initial value $\mathrm{p}(0, t)=0$.

Classical orthogonal polynomials (such as Hermite, Laguerre and Jacobi polynomials) are orthogonal with respect to a weight $w(x)$ on the real line which satisfies the Pearson equation
\begin{equation}\label{r}
 \frac{d}{{dx}}(\sigma (x)w(x)) = \tau (x)w(x),
\end{equation}
where $\sigma (x)$ and $\tau (x)$ are polynomials with deg $\sigma \le 2$ and deg $\tau$=1. For semi-classical orthogonal polynomials, they have a weight $w(x)$ that satisfies the Pearson equation (\ref{r}), where $\sigma (x)$ and $\tau (x)$ are polynomials with deg $\sigma > 2$ or deg $\tau\neq 1$. See, e.g., \cite[Section 1.1.1]{W}.
It is not difficult  to find that (\ref{wei}) is a semi-classical weight, since it satisfies the Pearson equation (\ref{r}) with
$$
\sigma (x) = {x^2} + t x, \qquad \tau (x) =  - {x^2} + (2 + \alpha+ \lambda  - t)x + t (1+\alpha).
$$


One of the important properties of orthogonal polynomials is that they satisfy the three-term recurrence relation:
$$
xP_{n}(x;t)=P_{n+1}(x;t)+\al_n(t) P_n(x;t)+\beta_{n}(t)P_{n-1}(x;t),
$$
with the initial conditions
$$
P_0(x;t):=1,\qquad \beta_0(t) P_{-1}(x;t):=0.
$$
It can be found that the recurrence coefficients have the following expressions:
\begin{equation}\label{alpha1}
\alpha_{n}(t)=\mathrm{p}(n,t)-\mathrm{p}(n+1,t),
\end{equation}
\begin{equation}\label{be}
\beta_{n}(t)=\frac{h_{n}(t)}{h_{n-1}(t)}.
\end{equation}
Moreover, taking a telescopic sum of (\ref{alpha1}) produces an important identity
\begin{equation}\label{sum}
\sum_{j=0}^{n-1}\al_j(t)=-\mathrm{p}(n,t).
\end{equation}
Combining (\ref{be}) and (\ref{d_nt1}), we have the following relation:
\begin{equation}\label{bd}
\bt_n(t)=\frac{D_{n+1}(t) D_{n-1}(t)}{D_n^2(t)}.
\end{equation}
For more information about orthogonal polynomials, see \cite{Chihara,Ismail,Szego}.

The rest of the paper is organized as follows. In Section 2, we apply the ladder operators and compatibility conditions to orthogonal polynomials with respect to the deformed Laguerre weight. We obtain a discrete system and a system of differential-difference equations for the recurrence coefficients, and show that the orthogonal polynomials satisfy a second-order differential equation. The relations between the logarithmic derivative of the Hankel determinant, the sub-leading coefficient of the monic orthogonal polynomials and the recurrence coefficients have also been established. In Section 3, we use Dyson's Coulomb fluid approach to derive the large $n$ asymptotic expansions of the recurrence coefficients $\alpha_n(t)$ and $\beta_n(t)$, the sub-leading coefficient $\mathrm p(n, t)$, the Hankel determinant $D_n(t)$ and the normalized constant $h_n(t)$ for fixed $t\in\mathbb{R}^{+}$. In Section 4, we consider the long-time asymptotics ($t\rightarrow+\infty$) of the recurrence coefficients, the sub-leading  coefficient, the Hankel determinant and the normalized constant for fixed $n\in\mathbb{N}$.

\section{Ladder operators, compatibility conditions and difference and differential equations}
The ladder operator approach developed by Chen and Ismail \cite{Chen1997} is useful to analyze the recurrence coefficients of orthogonal polynomials and the associated Hankel determinants; see, e.g., \cite{Basora,cm,Dai,Filipuk,Min2021,Min2023}. However, they require that the weight be well defined at the endpoints of the orthogonality interval (vanish at the endpoints most of the time in practice). Recently, the first author and Fang \cite{F} derived the ladder operators and associated compatibility conditions for orthogonal polynomials with a general Laguerre-type weight of the form
\be\label{gl}
w(x) = {x^\alpha }w_0(x),\qquad x\in \mathbb{R}^{+},
\ee
where $\alpha>-1,\; w_0(x)$ is a continuously differentiable function defined on $[0,\infty)$ and all the moments
$
\int_{0}^{\infty}x^jw(x)dx,\; j=0,1,2,\dots
$
exist.

The monic orthogonal polynomials with respect to the general Laguerre-type weight (\ref{gl}) satisfy the lowering and raising operator equations \cite{F}
\begin{equation}\label{bnn1}
\left(\frac{d}{dx}+B_{n}(x)\right)P_{n}(x)=\beta_{n}A_{n}(x)P_{n-1}(x),
\end{equation}
\begin{equation}\label{bn2}
\left(\frac{d}{dx}-B_{n}(x)-\mathrm{v}'(x)\right)P_{n-1}(x)=-A_{n-1}(x)P_{n}(x),
\end{equation}
where $\mathrm{v}(x):=-\ln w(x)$ and
\begin{equation}\label{an}
A_{n}(x):=\frac{1}{x}\cdot\frac{1}{h_{n}}\int_{0}^{\infty}\frac{x \mathrm{v}'(x)-y \mathrm{v}'(y)}{x-y}P_{n}^{2}(y)w(y)dy,
\end{equation}
\begin{equation}\label{bn}
B_{n}(x):=\frac{1}{x} \left( \dfrac{1}{h_{n-1}}\int_{0}^{\infty}\dfrac{x \mathrm{v}'(x)-y \mathrm{v}'(y)}{x-y}P_{n}(y)P_{n-1}(y)w(y)dy-n\right).
\end{equation}

Based on the above definitions of  $A_n(x)$ and $B_n(x)$, the first author and Fang \cite{F} proved that $A_n(x)$ and $B_n(x)$ satisfy the following compatibility conditions:
\be
B_{n+1}(x)+B_{n}(x)=(x-\alpha_n) A_{n}(x)-\mathrm{v}'(x), \tag{$S_{1}$}
\ee
\be
1+(x-\alpha_n)(B_{n+1}(x)-B_{n}(x))=\beta_{n+1}A_{n+1}(x)-\beta_{n}A_{n-1}(x), \tag{$S_{2}$}
\ee
\be
B_{n}^{2}(x)+\mathrm{v}'(x)B_{n}(x)+\sum_{j=0}^{n-1}A_{j}(x)=\beta_{n}A_{n}(x)A_{n-1}(x), \tag{$S_{2}'$}
\ee
where ($S_{2}'$) is obtained  from a suitable combination of ($S_{1}$) and ($S_{2}$).

In this section, we apply the above ladder operators and compatibility conditions to orthogonal polynomials with a slightly more general weight compared to (\ref{wei}) studied by Chen and McKay \cite{cm}:
\begin{equation}\label{weig}
w(x;t) = {x^\alpha }{\mathrm e^{ - x}}{(x + t)^\lambda }, \qquad x\in \mathbb{R}^{+},
\end{equation}
with parameters $\alpha > -1,\; t > 0$ and $\lambda \in \mathbb{R}$. It is obvious that all the formulas in the Introduction still hold for the weight (\ref{weig}).

From (\ref{weig}) we have
\begin{equation}\label{pt1}
\mathrm v(x) =  - \ln w(x) = x - \alpha \ln x - \lambda \ln (x + t).
\end{equation}
It follows that
\begin{equation}\label{vp2}
\mathrm v'(x)=1-\frac{\alpha }{x}-\frac{\lambda}{x+t},
\end{equation}
and
\begin{equation}\label{vp1}
\frac{x \mathrm v^\prime(x) - y \mathrm v^\prime(y)}{x - y} = 1- \frac{\lambda t }{(x + t)(y + t)}.
\end{equation}
Inserting (\ref{vp1}) into the definitions of $A_n(x)$ and $B_n(x)$ in (\ref{an}) and (\ref{bn}), we obtain
\begin{equation}\label{anz1}
{A_n}(x) =
\frac{1- {R_n}(t)}{x} +\frac{{R_n}(t)}{x + t},
\end{equation}
\begin{equation}\label{bnz1}
{B_n}(x) = -\dfrac{n+{r_n}(t)}{x} + \dfrac{{r_n}(t)}{x + t},
\end{equation}
where
\be\label{Rn}
{R_n}(t): = \frac{\lambda }{{h_n}}\int_0^\infty  {\frac{{P^2_n}(x)w(x)}{x + t}}d x,
\ee
\be\label{rn}
{r_n}(t): = \frac{\lambda }{{h_{n - 1}}}\int_0^\infty  {\frac{{{P_n}(x){P_{n - 1}}(x)}w(x)}{{x + t}}} d x.
\ee
\begin{remark}
The expressions of $A_n(x)$ and $B_n(x)$ in (\ref{anz1}) and (\ref{bnz1}) coincide with those in the paper of Chen and McKay \cite{cm}, while they have used integration by parts to get the results.
\end{remark}
\begin{lemma}
The recurrence coefficients and auxiliary quantities satisfy the following conditions:
\begin{equation}\label{s12}
{r_n} + {r_{n + 1}} = \lambda  - {R_n}(t + {\alpha _n}),
\end{equation}
\begin{equation}\label{s11}
2n + 1 +\alpha+ {r_n} + {r_{n + 1}}={\alpha _n}(1 - {R_n}) ,
\end{equation}
\begin{equation}\label{s21}
{r_n^2} - \lambda {r_n} = {\beta _n}{R_n}{R_{n - 1}},
\end{equation}
\begin{equation}\label{s23}
{(n + {r_n})^2} + \alpha (n + {r_n}) = {\beta _n}(1 - {R_n})(1 - {R_{n - 1}}),
\end{equation}
\begin{equation}\label{s22a}
 n\lambda-2r_n^2-(  2 n + \alpha-\lambda+ t ){r_n} - t\sum\limits_{j = 0}^{n - 1} {{R_j}}  = {\beta _n}[ {R_n}(1 - {R_{n - 1}}) + {R_{n - 1}}(1 - {R_n} )].
\end{equation}
\end{lemma}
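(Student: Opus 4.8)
The plan is to substitute the explicit partial-fraction forms of $A_n(x)$ and $B_n(x)$ from \eqref{anz1} and \eqref{bnz1}, together with $\mathrm v'(x)$ from \eqref{vp2}, directly into the compatibility conditions. Since $A_n$, $B_n$ and $\mathrm v'$ are rational functions of $x$ whose only poles lie at $x=0$ and $x=-t$, each compatibility condition becomes an identity between rational functions with poles confined to these two points, and the desired relations emerge by equating the coefficients in the partial-fraction decomposition of both sides. In fact all five relations follow from $(S_1)$ and $(S_2')$ alone, so $(S_2)$ need not be invoked separately.

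First I would treat $(S_1)$. The left-hand side $B_{n+1}(x)+B_n(x)$ is a simple combination of $\tfrac1x$ and $\tfrac1{x+t}$ terms, while on the right one expands $(x-\alpha_n)A_n(x)$ using $\tfrac{x}{x+t}=1-\tfrac{t}{x+t}$, so that it reduces to a constant plus $\tfrac1x$ and $\tfrac1{x+t}$ pieces; subtracting $\mathrm v'(x)$ cancels the constant terms. Matching the residue at $x=0$ then yields \eqref{s11}, and matching the residue at $x=-t$ yields \eqref{s12}.

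Next I would turn to $(S_2')$, the more involved case. Here $B_n^2(x)$, $\mathrm v'(x)B_n(x)$ and the right-hand side $\beta_n A_n(x)A_{n-1}(x)$ all produce double poles $\tfrac1{x^2}$ and $\tfrac1{(x+t)^2}$; comparing the coefficients of $\tfrac1{x^2}$ immediately gives \eqref{s23}, and comparing those of $\tfrac1{(x+t)^2}$ gives \eqref{s21}. The sum $\sum_{j=0}^{n-1}A_j(x)$ contributes only simple poles, with residue $n-\sum_{j=0}^{n-1}R_j$ at $x=0$ and residue $\sum_{j=0}^{n-1}R_j$ at $x=-t$, which is the origin of the cumulative sum in \eqref{s22a}.

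The step I expect to be the main obstacle is extracting the simple-pole relation from $(S_2')$, because both $B_n^2(x)$ and $\mathrm v'(x)B_n(x)$ generate cross terms proportional to $\tfrac1{x(x+t)}$ that are not yet in partial-fraction form. I would resolve these through $\tfrac1{x(x+t)}=\tfrac1t\bigl(\tfrac1x-\tfrac1{x+t}\bigr)$ and then collect the resulting pieces together with the genuine simple-pole terms coming from $\mathrm v'(x)B_n(x)$ and from $\sum_{j=0}^{n-1}A_j(x)$. Equating the coefficient of $\tfrac1x$ --- equivalently of $\tfrac1{x+t}$, which gives the same relation up to sign and so serves as a consistency check --- then yields \eqref{s22a} after clearing the factor $t$ and simplifying. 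Beyond this point the argument is routine but careful bookkeeping of residues.
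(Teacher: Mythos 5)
Your proposal is correct and follows essentially the same route as the paper, which simply substitutes the partial-fraction forms \eqref{anz1} and \eqref{bnz1} into $(S_1)$ and $(S_2')$ and reads off the relations; your residue bookkeeping (including the $\tfrac{1}{x(x+t)}=\tfrac1t(\tfrac1x-\tfrac1{x+t})$ reduction for \eqref{s22a}) checks out. The only difference is that you spell out the computation that the paper leaves implicit.
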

\begin{proof}
The results are obtained by substituting (\ref{anz1}) and (\ref{bnz1}) into the compatibility conditions ($S_{1}$) and ($S_{2}'$).
See also \cite{cm}.
\end{proof}

\begin{lemma}
The auxiliary quantities $R_n(t)$ and $r_n(t)$ can be expressed in terms of the recurrence coefficients $\alpha_n$ and $\beta_n$ as follows:
\begin{equation}\label{alpha2}
 t{R_n}(t)=2 n+1 +  \alpha  + \lambda-{\alpha _n},
\end{equation}
\begin{equation}\label{s25}
(2 n+\alpha  + \lambda )t{r_n}(t) = - n(n + \alpha )t-\beta _n(4 n + 2 \alpha  + 2\lambda-t-{\alpha _n}-{\alpha _{n-1}}).
\end{equation}
\end{lemma}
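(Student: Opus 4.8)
The plan is to derive both identities purely algebraically from the system (\ref{s12})--(\ref{s23}) already established in the previous lemma, exploiting the fact that the genuinely nonlinear pieces cancel once the right two relations are combined. No new integral identities are needed; everything reduces to eliminating the auxiliary sums and quadratic terms.

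For (\ref{alpha2}) I would eliminate the sum $r_n+r_{n+1}$ between (\ref{s11}) and (\ref{s12}). Relation (\ref{s12}) gives $r_n+r_{n+1}=\lambda-R_n(t+\alpha_n)$; substituting this for the sum appearing on the left of (\ref{s11}) yields
\[
2n+1+\alpha+\lambda-R_n(t+\alpha_n)=\alpha_n(1-R_n).
\]
Expanding both sides, the cross term $R_n\alpha_n$ occurs identically on each side and cancels, leaving the purely linear relation $tR_n=2n+1+\alpha+\lambda-\alpha_n$, which is (\ref{alpha2}). The only point to verify is that this cancellation is exact, so that no quadratic dependence on $\alpha_n$ survives.

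For (\ref{s25}) I would subtract (\ref{s21}) from (\ref{s23}). On the left, expanding $(n+r_n)^2+\alpha(n+r_n)$ and subtracting $r_n^2-\lambda r_n$ makes the $r_n^2$ terms cancel, leaving $n(n+\alpha)+(2n+\alpha+\lambda)r_n$, linear in $r_n$. On the right, the difference $\beta_n[(1-R_n)(1-R_{n-1})-R_nR_{n-1}]$ has its quadratic product $R_nR_{n-1}$ cancel, collapsing to $\beta_n[1-(R_n+R_{n-1})]$. At this stage I would invoke (\ref{alpha2}), now available, for both indices to write $R_n+R_{n-1}=(4n+2\alpha+2\lambda-\alpha_n-\alpha_{n-1})/t$; inserting this and clearing the common factor $t$ produces exactly (\ref{s25}).

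The structural point is that in each case the particular combination one forms is precisely the one in which the nonlinear terms ($R_n\alpha_n$ in the first derivation, and the quadratic $R_nR_{n-1}$ together with $r_n^2$ in the second) drop out, reducing an a priori quadratic system to a linear solve for $R_n$ and then $r_n$. I therefore expect no genuine obstacle beyond bookkeeping; the only care needed is in tracking signs and the shifted index $n-1$ when substituting (\ref{alpha2}) for $R_{n-1}$.
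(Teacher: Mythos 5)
Your proposal is correct and follows exactly the paper's own route: eliminating $r_n+r_{n+1}$ between (\ref{s12}) and (\ref{s11}) to get (\ref{alpha2}), then subtracting (\ref{s21}) from (\ref{s23}) and substituting (\ref{alpha2}) at both indices to obtain (\ref{s25}). The cancellations you flag (the $R_n\alpha_n$ cross term and the $r_n^2$, $R_nR_{n-1}$ quadratic terms) do occur exactly as claimed.
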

\begin{proof}
Eliminating $r_n+r_{n+1}$ from the combination of (\ref{s12}) and (\ref{s11}), we get (\ref{alpha2}). Making a difference of (\ref{s21}) and (\ref{s23}) gives
\begin{equation}\label{s24}
(2 n+\alpha  + \lambda )t{r_n}={\beta _n}(t-t{R_n} - t{R_{n - 1}}) - n(n + \alpha )t.
\end{equation}
Substituting (\ref{alpha2}) into (\ref{s24}), we obtain (\ref{s25}).
\end{proof}
\begin{theorem}\label{th1}
The recurrence coefficients $\alpha_n$ and $\bt_n$ satisfy the discrete system
\begin{subequations}\label{ds1}
\begin{align}
&\:(2 n+2+\alpha  + \lambda  )[n(n+\al) t+ \beta_{n }  (4 n + 2\alpha  + 2\lambda -t - \alpha _{n } - \alpha_{n-1})]\no\\
&+(2 n+\alpha  + \lambda  )[(n+1)(n+1+\al) t+ \beta_{n + 1}  (4 n +4+ 2\alpha  + 2\lambda -t - \alpha _{n + 1} - \alpha_n)]\no\\
&=(2 n+\alpha  + \lambda  )(2 n+2+\alpha  + \lambda  )[(t + \alpha _n )(2 n +1+\alpha  + \lambda - \alpha _n)-\lambda t],\label{d11}\\[8pt]
&\:[n(n+\al) t+ \beta_{n }  (4 n + 2\alpha  + 2\lambda -t - \alpha _{n } - \alpha_{n-1})]^2\no\\
&+(2n+\al+\la)\la t[n(n+\al) t+ \beta_{n }  (4 n + 2\alpha  + 2\lambda -t - \alpha _{n } - \alpha_{n-1})]\no\\
&=(2 n+\alpha  + \lambda  )^2\bt_n(2 n +1+\alpha  + \lambda - \alpha _n)(2 n -1+\alpha  + \lambda - \alpha _{n-1}).\label{d12}
\end{align}
\end{subequations}
\end{theorem}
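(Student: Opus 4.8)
The plan is to eliminate the auxiliary quantities $r_n(t)$ and $R_n(t)$ altogether, turning two of the compatibility relations \eqref{s12}--\eqref{s22a} into identities in $\alpha_n$ and $\beta_n$ alone by means of the explicit expressions \eqref{alpha2} and \eqref{s25}. The key observation is that the bracketed expression recurring throughout the statement,
$$
n(n+\alpha)t + \beta_n\bigl(4n+2\alpha+2\lambda-t-\alpha_n-\alpha_{n-1}\bigr),
$$
is exactly $-(2n+\alpha+\lambda)\,t\,r_n$ by \eqref{s25}; replacing $n$ by $n+1$ shows that the second bracket in \eqref{d11} equals $-(2n+2+\alpha+\lambda)\,t\,r_{n+1}$. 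Once these brackets are recognized as multiples of $r_n$ and $r_{n+1}$, both equations of the system collapse onto relations already proved.

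To establish \eqref{d11}, I would begin from \eqref{s12}, written as $r_n+r_{n+1}=\lambda-R_n(t+\alpha_n)$, and multiply through by $-t(2n+\alpha+\lambda)(2n+2+\alpha+\lambda)$. The left-hand side then factors as $(2n+2+\alpha+\lambda)\bigl[-t(2n+\alpha+\lambda)r_n\bigr]+(2n+\alpha+\lambda)\bigl[-t(2n+2+\alpha+\lambda)r_{n+1}\bigr]$, and by \eqref{s25} (at indices $n$ and $n+1$) the two inner brackets are exactly those on the left-hand side of \eqref{d11}. On the right, I would substitute \eqref{alpha2} in the form $tR_n=2n+1+\alpha+\lambda-\alpha_n$ to rewrite $tR_n(t+\alpha_n)$ as $(2n+1+\alpha+\lambda-\alpha_n)(t+\alpha_n)$, whence the right-hand side takes the stated factored form.

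To establish \eqref{d12}, I would begin from \eqref{s21}, namely $r_n^2-\lambda r_n=\beta_n R_n R_{n-1}$, and multiply through by $(2n+\alpha+\lambda)^2 t^2$. Writing $B:=-(2n+\alpha+\lambda)\,t\,r_n$ for the bracket, the left-hand side becomes $B^2+(2n+\alpha+\lambda)\lambda t\,B$, which is precisely the left-hand side of \eqref{d12}. For the right-hand side I would again invoke \eqref{alpha2}, together with its shift $tR_{n-1}=2n-1+\alpha+\lambda-\alpha_{n-1}$, so that $(2n+\alpha+\lambda)^2 t^2\beta_n R_n R_{n-1}$ collapses to $(2n+\alpha+\lambda)^2\beta_n(2n+1+\alpha+\lambda-\alpha_n)(2n-1+\alpha+\lambda-\alpha_{n-1})$, as claimed.

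Since every ingredient is already in hand, there is no genuine analytic difficulty here; the derivation is purely algebraic, proceeding by multiplication and substitution alone. The one point demanding care is the bookkeeping of the multiplicative factors $t$ and $(2n+\alpha+\lambda)$ and of the index shifts $n\mapsto n+1$ and $n\mapsto n-1$ in \eqref{alpha2} and \eqref{s25}, since a single misplaced factor would spoil the cancellations that make $r_n$ and $R_n$ disappear from the final system.
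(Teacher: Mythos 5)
Your proposal is correct and follows exactly the paper's route: the paper proves the theorem by substituting (\ref{alpha2}) and (\ref{s25}) into (\ref{s12}) and (\ref{s21}), respectively, which is precisely your elimination of $r_n$ and $R_n$ via those two identities. Your identification of the recurring bracket as $-(2n+\alpha+\lambda)\,t\,r_n$ and the bookkeeping of the index shifts are all accurate, so nothing further is needed.
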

\begin{proof}
The theorem is established by substituting (\ref{alpha2})  and (\ref{s25}) into (\ref{s12}) and (\ref{s21}), respectively.
\end{proof}
\begin{remark}
The above discrete system will play an important role in the derivation of the large $n$ asymptotic expansions  of $\alpha_n(t)$ and $\beta _n(t)$ in the next section.
\end{remark}

\begin{theorem}
The orthogonal polynomials $P_n(x)$ satisfy the second-order differential equation
\begin{align}\label{p4}
&{P''_n}(x) - \bigg(\mathrm v^\prime(x)  + \dfrac{{A_n'(x)}}{{{A_n}(x)}}\bigg)P_n'(x) + \bigg( B_n'(x) - B_n^2(x) - \mathrm v^\prime(x) {B_n}(x)\no\\
&+ {\beta _n}{A_n}(x){A_{n - 1}}(x)-\dfrac{{A_n'(x){B_n}(x)}}{{{A_n}(x)}}\bigg) {P_n}(x) = 0,
\end{align}
where $\mathrm v^\prime(x)$  is given by (\ref{vp2}), and $A_n(x)$ and $B_n(x)$ are expressed in terms of ${\alpha _n}$ and $\beta_n$ as follows:
\begin{equation}\label{A1}
A_n(x)=\frac{x+t-2n-1-\alpha-\lambda+\alpha_n}{x(x+t)} ,
\end{equation}
\begin{equation}\label{B1}
B_n(x)= \frac{n(n+\al)t-n(2n+\al+\la)(x+t)+\beta _n(4 n + 2 \alpha  +2\lambda-t-\alpha _n-\alpha _{n-1})}{(2 n+\alpha  + \lambda )x(x+t)} .
\end{equation}
Note that ${\alpha _n}$ and $\beta_n$ are independent of $x$.
\end{theorem}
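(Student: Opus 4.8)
The plan is to treat the two assertions of the theorem separately: first establish the closed forms \eqref{A1} and \eqref{B1} for $A_n(x)$ and $B_n(x)$ by pure algebra on quantities already computed, and then derive the differential equation \eqref{p4} by eliminating $P_{n-1}(x)$ between the lowering and raising operator relations \eqref{bnn1} and \eqref{bn2}.

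For the explicit forms, I would start from the partial-fraction expressions \eqref{anz1} and \eqref{bnz1} and put each over the common denominator $x(x+t)$. A short computation gives
\[
A_n(x) = \frac{x + t - t R_n(t)}{x(x+t)}, \qquad B_n(x) = \frac{-n(x+t) - t\, r_n(t)}{x(x+t)}.
\]
Substituting the identity \eqref{alpha2}, namely $t R_n(t) = 2n+1+\al+\la-\al_n$, into the numerator of $A_n$ immediately yields \eqref{A1}. For $B_n$, I would insert the formula \eqref{s25} for $(2n+\al+\la)\,t\,r_n(t)$, absorb the extra factor $2n+\al+\la$ into the denominator, and collect terms; this reproduces \eqref{B1}. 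This part is routine, requiring only careful bookkeeping of signs.

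The substantive step is the differential equation, for which the idea is standard for a ladder-operator pair: solve the lowering relation \eqref{bnn1} for $P_{n-1}(x)$,
\[
P_{n-1}(x) = \frac{1}{\bt_n A_n(x)}\bigl(P_n'(x) + B_n(x) P_n(x)\bigr),
\]
differentiate it, and substitute both $P_{n-1}$ and $P_{n-1}'$ into the raising relation \eqref{bn2}. After multiplying through by $\bt_n A_n(x)$ to clear denominators, the terms linear in $P_{n-1}$ cancel and one is left with a relation among $P_n''$, $P_n'$ and $P_n$ alone. Collecting the coefficient of $P_n'$ produces $-\bigl(\mathrm v'(x) + A_n'(x)/A_n(x)\bigr)$, while the coefficient of $P_n$ assembles into the bracket appearing in \eqref{p4}; the term $\bt_n A_n A_{n-1}$ enters precisely from the right-hand side $-A_{n-1}(x)P_n(x)$ of \eqref{bn2}.

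The main obstacle is bookkeeping rather than conceptual: one must track the $A_n'(x)/A_n(x)$ factors generated when differentiating $P_{n-1}$, since these are the only genuinely non-polynomial coefficients, and verify that every $P_{n-1}$-term cancels so that a bona fide second-order ODE in $P_n$ results. Once the coefficients are identified abstractly, the explicit forms \eqref{A1} and \eqref{B1}, together with $\mathrm v'(x)$ from \eqref{vp2}, can be inserted to render the equation fully explicit if desired.
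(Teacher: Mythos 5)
Your proposal is correct and follows exactly the paper's route: the explicit forms \eqref{A1} and \eqref{B1} come from substituting \eqref{alpha2} and \eqref{s25} into \eqref{anz1} and \eqref{bnz1}, and the ODE \eqref{p4} comes from solving \eqref{bnn1} for $P_{n-1}$, differentiating, and inserting into \eqref{bn2}. The paper states this in one line; your write-up simply supplies the routine algebra (the only loose phrasing is that the $P_{n-1}$-terms are substituted away rather than ``cancelled,'' which does not affect the argument).
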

\begin{proof}
Equation (\ref{p4}) is obtained by eliminating $P_{n-1}(x)$ from the ladder operator equations (\ref{bnn1}) and (\ref{bn2}). Substituting (\ref{alpha2}) and (\ref{s25}) into (\ref{anz1})  and (\ref{bnz1}), we get (\ref{A1}) and (\ref{B1}), respectively. This completes the proof.
\end{proof}
\begin{remark}
When $\la=0$, it can be found from (\ref{anz1})--(\ref{rn}) that
$$
R_n(t)=0,\qquad r_n(t)=0,
$$
and
$$
A_n(x)=\frac{1}{x},\qquad B_n(x)=-\frac{n}{x}.
$$
It follows from (\ref{alpha2}) and (\ref{s25}) that
$$
\al_n=2n+\al+1,\qquad \bt_n=n(n+\al).
$$
Then, the differential equation (\ref{p4}) is reduced to
$$
xP_n''(x)+(\al+1-x)P_n'(x)+nP_n(x)=0,
$$
which is the differential equation satisfied by the classical Laguerre polynomials \cite[p. 100]{Szego}.
\end{remark}

Next, we consider the time evolution of our problem.
From (\ref{h_j}) we have
$$
\int_{0}^{\infty}P_n(x;t)P_{n-1}(x;t)w(x;t) dx=0
$$
and
$$
\int_{0}^{\infty}P_n^2(x;t)w(x;t) dx =h_n(t).
$$
Taking derivatives with respect to $t$ produce
\begin{equation}\label{ddt1}
\frac{d}{d t}\mathrm{p}(n,t)=-r_n(t)
\end{equation}
and
\begin{equation}\label{dlnhnt}
\frac{d}{d t}\ln h_n(t)=R_n(t),
\end{equation}
respectively.
Using (\ref{alpha1}) and (\ref{be}), it follows that
\begin{equation}\label{all1}
\alpha_n'(t)=r_{n+1}(t)-r_n(t)
\end{equation}
and
\be\label{bn3}
\beta_{n}'(t)=\beta_n (R_{n}(t)- R_{n-1}(t)).
\ee
\begin{theorem}
The recurrence coefficients $\alpha_n(t)$ and $\beta_n(t)$ satisfy the coupled differential-difference equations:
\begin{subequations}\label{tb}
\begin{align}
&\:(2 n+\alpha+\lambda)(2 n+2+\alpha+\lambda) t \alpha'_n(t)=-t[2n^2+2n(1+\alpha+\lambda)+(1+\alpha)(\alpha+\lambda)]\no\\
&  -(2 n+\alpha+\lambda)\beta_{n+1}(4n+4+2\alpha+2\lambda-t-\alpha_{n+1}-\alpha_n)\no\\
&  +(2 n+2+\alpha+\lambda)\beta_{n}(4n+2\alpha+2\lambda-t-\alpha_n-\alpha_{n-1}),\label{tb2}\\[8pt]
&\:\:t \beta'_n(t)=\beta_n(\alpha_{n-1}-\alpha_n+2).\label{tb1}
\end{align}
\end{subequations}
\end{theorem}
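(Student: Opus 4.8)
The plan is to derive both equations from the two time-evolution relations already established, namely (\ref{all1}) and (\ref{bn3}), by eliminating the auxiliary quantities $r_n(t)$ and $R_n(t)$ in favour of $\alpha_n$ and $\beta_n$ through the closed-form expressions of Lemma~2.3, that is (\ref{alpha2}) and (\ref{s25}). Since differentiation in $t$ of $\mathrm{p}(n,t)$ and $\ln h_n(t)$ has already produced $-r_n$ and $R_n$ respectively, the derivatives $\alpha_n'(t)$ and $\beta_n'(t)$ are expressed purely in terms of shifted $r$'s and $R$'s; substituting the algebraic representations of these quantities then yields equations involving only $\alpha_n,\alpha_{n\pm1},\beta_n,\beta_{n+1}$ and $t$.

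Equation (\ref{tb1}) is the quicker of the two. First I would start from (\ref{bn3}), multiply by $t$, and insert (\ref{alpha2}) twice, once at level $n$ and once at level $n-1$, to get
\[
t\beta_n'(t)=\beta_n\,\bigl(tR_n(t)-tR_{n-1}(t)\bigr)
=\beta_n\bigl[(2n+1+\alpha+\lambda-\alpha_n)-(2n-1+\alpha+\lambda-\alpha_{n-1})\bigr].
\]
The bracket collapses immediately to $\alpha_{n-1}-\alpha_n+2$, which is exactly (\ref{tb1}); no further simplification is required.

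For (\ref{tb2}) I would begin with (\ref{all1}) and use (\ref{s25}) to write $r_n$ and $r_{n+1}$ as rational functions of the recurrence coefficients, with denominators $(2n+\alpha+\lambda)t$ and $(2n+2+\alpha+\lambda)t$ respectively. Placing $r_{n+1}-r_n$ over the common denominator $(2n+\alpha+\lambda)(2n+2+\alpha+\lambda)t$ and multiplying across produces the stated left-hand side $t\alpha_n'(t)$ times that product. The $\beta$-dependent numerators reproduce the two $\beta_{n+1}$ and $\beta_n$ terms on the right of (\ref{tb2}) verbatim, so the only genuine work is collecting the purely polynomial part $-(2n+\alpha+\lambda)(n+1)(n+1+\alpha)+(2n+2+\alpha+\lambda)\,n(n+\alpha)$ coming from the $-n(n+\alpha)t$ pieces. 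The hard part is exactly this scalar bookkeeping: one applies the identity $(n+1)(n+1+\alpha)-n(n+\alpha)=2n+1+\alpha$ to reduce the expression to $-(2n+\alpha+\lambda)(2n+1+\alpha)+2n(n+\alpha)$, and then verifies that the term independent of $n$ factors as $(1+\alpha)(\alpha+\lambda)$, yielding the coefficient $-t\bigl[2n^2+2n(1+\alpha+\lambda)+(1+\alpha)(\alpha+\lambda)\bigr]$. Beyond keeping the signs and the shifted indices straight, there is no conceptual obstacle.
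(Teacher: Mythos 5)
Your proposal is correct and follows exactly the paper's route: substituting (\ref{s25}) into (\ref{all1}) and (\ref{alpha2}) into (\ref{bn3}), with the algebraic bookkeeping you describe (including the identity $(n+1)(n+1+\alpha)-n(n+\alpha)=2n+1+\alpha$) carried out accurately. The paper states this in one line; you have simply made the same computation explicit.
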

\begin{proof}
Substituting (\ref{s25}) into (\ref{all1}) and (\ref{alpha2}) into (\ref{bn3}), we obtain the desired results.
\end{proof}

Let ${H_n}(t)$ denote the quantity related to the logarithmic derivative of the Hankel determinant $D_{n}(t)$, i.e.,
\begin{equation}\label{Pin}
{H _n}(t): = t\frac{d}{{dt}}\ln {D_n}(t).
\end{equation}
It follows from (\ref{d_nt1}) and (\ref{dlnhnt}) that
\begin{equation}\label{eq4}
{H _n}(t)=t \sum_{j=0}^{n-1}R_j(t)
\end{equation}
and
\begin{equation}\label{tR}
t R_n(t)={H_{n+1}}(t)-{H _n}(t).
\end{equation}
Using (\ref{sum}), (\ref{alpha2}) and (\ref{eq4}), we have
\be
{H _n}(t) =n(n + \alpha  + \lambda ) + \mathrm p(n,t).\label{ex1}
\ee
Replacing $\sum_{j=0}^{n-1}R_j(t)$ by ${H _n}(t)$ in (\ref{s22a}), and using (\ref{s21}) and (\ref{s23}), we obtain
\be
{H _n}(t)
=n(n + \alpha  + \lambda ) - {\beta _n} - t{r_n}(t). \label{ex2}
\ee
From (\ref{ex1}) and (\ref{ex2}), we have the relation
\begin{equation}\label{pp}
\mathrm{p}(n,t)= - {\beta _n} - t{r_n}(t).
\end{equation}
\begin{theorem}\label{thm2}
Let $W_n(t):=1-\frac{1}{1-R_n(t)}$, where $R_n(t)$ is closely related to the diagonal recurrence coefficient $\al_n$ by (\ref{alpha2}). Then $W_n(t)$ satisfies a particular Painlev\'{e} V equation \cite[p. 1]{Gromak}
$$
W_n''=\frac{3W_n-1}{2W_n(W_n-1)}(W_n')^2-\frac{W_n'}{t}+\frac{(W_n-1)^2}{t^2}\left(\mu_0 W_n+\frac{\mu_1}{W_n}\right)+\frac{\mu_2 W_n}{t}+\frac{\mu_3 W_n(W_n+1)}{W_n-1},
$$
with parameters $\mu_0=\frac{\al^2}{2}, \mu_1=-\frac{\la^2}{2}, \mu_2=2n+\al+\la+1, \mu_3=-\frac{1}{2}$.

Let $\s_n(t):=H_n(t)-n\la$. Then $\s_n(t)$ satisfies the Jimbo-Miwa-Okamoto $\s$-form of the Painlev\'{e} V equation \cite[(C. 45)]{Jimbo1981}
$$
(t\s_n'')^2=[\s_n-t\s_n'+2(\s_n')^2+(\nu_0+\nu_1+\nu_2+\nu_3)\s_n']^2-4(\nu_0+\s_n')(\nu_1+\s_n')(\nu_2+\s_n')(\nu_3+\s_n'),
$$
with parameters $\nu_0=0, \nu_1=-n, \nu_2=\la, \nu_3=-n-\al$.
\end{theorem}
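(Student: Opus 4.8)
The plan is to treat the two assertions separately, reducing each to the algebraic relations of Lemmas~2.1--2.2, the time-evolution formulas (\ref{ddt1})--(\ref{bn3}), and the identities (\ref{ex1})--(\ref{pp}) for $H_n(t)$. These are, respectively, the $\sigma$-form found by Chen and McKay and the Painlev\'e~V satisfied by the (rescaled) diagonal recurrence coefficient found by Basor and Chen, so the content is a re-derivation inside the present ladder-operator framework.

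I would dispose of the $\sigma$-form first, since it is the cleaner of the two. Combining (\ref{ex1}) with (\ref{ddt1}) gives $\sigma=H_n-n\la=n(n+\al)+\mathrm{p}(n,t)$, so that $\sigma'=-r_n$; combining this with (\ref{ex2}) (equivalently (\ref{pp})) gives $\sigma-t\sigma'=n(n+\al)-\bt_n$, whence $\bt_n=n(n+\al)-\sigma+t\sigma'$ and therefore $t\sigma''=\bt_n'$. With $\nu_0=0,\ \nu_1=-n,\ \nu_2=\la,\ \nu_3=-n-\al$ one has $\nu_0+\sigma'=-r_n$, $\nu_1+\sigma'=-(n+r_n)$, $\nu_2+\sigma'=\la-r_n$ and $\nu_3+\sigma'=-(n+\al+r_n)$; grouping the quartic product as $[\,r_n(\la-r_n)\,]\cdot[\,(n+r_n)(n+\al+r_n)\,]$ and invoking (\ref{s21}) and (\ref{s23}) collapses it to
\[
4\prod_{i=0}^{3}(\nu_i+\sigma')=4\bt_n^{2}R_nR_{n-1}(1-R_n)(1-R_{n-1}).
\]
In parallel, the bracket $\sigma-t\sigma'+2(\sigma')^2+(\nu_0+\nu_1+\nu_2+\nu_3)\sigma'$, after using $2r_n^2=2\la r_n+2\bt_nR_nR_{n-1}$ from (\ref{s21}) and then (\ref{s24}) to eliminate $(2n+\al+\la)r_n$, should collapse to $\bt_n(2R_nR_{n-1}-R_n-R_{n-1})$. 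The whole statement then reduces to the elementary identity
\[
(2ab-a-b)^2-4ab(1-a)(1-b)=(a-b)^2
\]
with $a=R_n,\ b=R_{n-1}$: combined with (\ref{bn3}) in the form $\bt_n'=\bt_n(R_n-R_{n-1})$ and with $t\sigma''=\bt_n'$, this gives $(t\sigma'')^2=\bt_n^{2}(R_n-R_{n-1})^2$, which by the two reductions above is exactly $[\sigma-t\sigma'+2(\sigma')^2+(\nu_0+\nu_1+\nu_2+\nu_3)\sigma']^2-4\prod_{i=0}^{3}(\nu_i+\sigma')$, i.e.\ the claimed $\sigma$-form.

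For the Painlev\'e~V equation in $y$ I would work directly with $R_n$, using $R_n=y/(y-1)$ and $\al_n=2n+1+\al+\la-tR_n$ from (\ref{alpha2}). First, eliminating $r_{n+1}$ between the $S_1$ relation (\ref{s12}) and the Toda identity $\al_n'=r_{n+1}-r_n$ (\ref{all1}) expresses $r_n$ as a first-order quantity in $R_n$,
\[
r_n=\tfrac12\Big[\la-(2n+\al+\la)R_n-tR_n+tR_n^2+tR_n'\Big].
\]
Next, (\ref{s21}) and (\ref{s23}) give $\bt_nR_{n-1}=r_n(r_n-\la)/R_n$ and $\bt_n(1-R_{n-1})=(n+r_n)(n+\al+r_n)/(1-R_n)$, whose sum expresses $\bt_n$ through $R_n$ and $r_n$. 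Differentiating this expression for $\bt_n$ and equating it with the Toda equation (\ref{bn3}), written as $\bt_n'=\bt_nR_n-r_n(r_n-\la)/R_n$, eliminates both $R_{n-1}$ and $\bt_n$ and yields a closed second-order ODE for $R_n(t)$; the change of variable $R_n=y/(y-1)$ should then bring it to the canonical Painlev\'e~V form with $\mu_0=\al^2/2,\ \mu_1=-\la^2/2,\ \mu_2=2n+\al+\la+1,\ \mu_3=-\tfrac12$.

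The hard part will be this last derivation: the elimination producing the second-order ODE for $R_n$, together with its transformation to the $y$-variable, involves lengthy but mechanical algebra, and the delicate point is confirming that the four parameters emerge exactly as claimed (in particular the $\al^2/2$ and $-\la^2/2$ terms, which trace back to the two simple poles of $\mathrm{v}'(x)$ at $x=0$ and $x=-t$, and the sign of $\mu_3$). By contrast the $\sigma$-form is essentially self-proving once $\sigma'=-r_n$ and $t\sigma''=\bt_n'$ are established, because both the quartic and the bracket factor through $\bt_n,R_n,R_{n-1}$ and collapse via the single quadratic identity displayed above.
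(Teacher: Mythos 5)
Your proposal is correct in strategy but goes well beyond what the paper actually does: the paper's ``proof'' of this theorem is a bare citation to Basor--Chen and Chen--McKay, with no derivation at all. Your treatment of the $\sigma$-form, by contrast, is a complete and verifiable argument within the paper's own framework, and I checked it through: $\sigma'=-r_n$ and $t\sigma''=\beta_n'$ follow from (\ref{ex1}), (\ref{ddt1}), (\ref{ex2}); the quartic collapses via (\ref{s21}) and (\ref{s23}) to $4\beta_n^2R_nR_{n-1}(1-R_n)(1-R_{n-1})$; the bracket collapses via (\ref{s21}) and (\ref{s24}) to $\beta_n(2R_nR_{n-1}-R_n-R_{n-1})$; and the identity $(2ab-a-b)^2-4ab(1-a)(1-b)=(a-b)^2$ together with (\ref{bn3}) closes the argument. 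This is a genuine improvement over the citation-only proof, since it makes the $\sigma$-form self-contained using only quantities already derived in Section~2. For the Painlev\'e~V part your route --- express $r_n$ through $R_n,R_n'$ by combining (\ref{s12}), (\ref{all1}) and (\ref{alpha2}), then eliminate $\beta_n$ and $R_{n-1}$ between the sum of $\beta_nR_{n-1}=r_n(r_n-\lambda)/R_n$ and $\beta_n(1-R_{n-1})=(n+r_n)(n+\alpha+r_n)/(1-R_n)$ and the Toda relation (\ref{bn3}) --- is exactly the standard Basor--Chen elimination, and the intermediate formula for $r_n$ checks out; but you stop short of performing the elimination and the change of variable $R_n=y/(y-1)$, so the parameter values $\mu_0,\mu_1,\mu_2,\mu_3$ remain unverified in your write-up. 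As it stands, the first half of your proposal is a complete proof and the second half is a correct plan whose decisive computation is still deferred (to the same references the paper cites).
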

\begin{proof}
See \cite{Basora} and \cite{cm}. We write the $\s$-form of the Painlev\'{e} V equation more clearly here.
\end{proof}
\begin{remark}
From the analysis in this section, it can be seen that all the finite $n$ results for the deformed Laguerre weight in \cite{Basora} and \cite{cm} can be extended from $\al>0$ to $\al>-1$.
\end{remark}

\section{Large $n$ asymptotics}
In random matrix theory (RMT), the joint probability density function of the eigenvalues of $ n \times n $
Hermitian matrices from the deformed  Laguerre unitary ensemble associated with the weight (\ref{weig}) is given by \cite{Mehta}
$$
p\left(x_1, x_2, \ldots, x_n\right) \prod_{k=1}^n d x_k=\frac{1}{Z_n(t)}  \prod_{1 \leq i<j \leq n}(x_i-x_j)^2 \prod_{k=1}^n w({x_k};t) d x_k,
$$
where $Z_n(t)$ is the partition function
$$
Z_n(t):= \int_{(0, \infty)^n} \prod_{1 \leq i<j \leq n}(x_i-x_j)^2 \prod_{k=1}^n w({x_k};t) d x_k.
$$
Note that there is only a constant difference between $D_{n}(t)$ and $Z_n(t)$ using Andr\'{e}ief's or Heine's integration formula \cite[(2.2.11)]{Szego}:
$$
D_n(t)=\frac{1}{n!}Z_n(t).
$$

According to Dyson's Coulomb fluid approach \cite{Dyson}, the eigenvalues of  Hermitian matrices from the unitary ensemble can be approximated as a continuous fluid with the equilibrium density $\sigma(x)$ when $n$ is sufficiently large. When $\al\geq 0$, the potential of our problem in (\ref{pt1}) is either convex or satisfies the condition that $x\mathrm{v}'(x)$ increases on $\mathbb{R}^{+}$. In this case, the equilibrium density $\sigma(x)$ is supported on a single interval, denoted by $(0,b) (b>0)$; see Chen and Ismail \cite{ci1997} and also \cite[p. 198--199]{Saff}. According to \cite{ci1997}, the equilibrium density $\sigma(x)$ can be determined by the constrained minimization problem:
\begin{equation}\label{con}
\mathop {\min }\limits_\sigma  F[\sigma ] \qquad \text{ subject to}\qquad  \int_{0}^{b}\s(x)dx=n,
\end{equation}
where $F[\s]$ is the free energy defined by
\begin{equation}\label{fe}
F[\s]:=\int_{0}^{b}\s(x)\mathrm{v}(x)dx-\int_{0}^{b}\int_{0}^{b}\s(x)\ln|x-y|\s(y)dxdy.
\end{equation}

Upon minimization, the equilibrium density $\sigma(x)$ is found to satisfy the integral equation
\begin{equation}\label{ie}
\mathrm{v}(x)-2\int_{0}^{b}\ln|x-y|\s(y)d y=A,\qquad x\in (0,b),
\end{equation}
where $A$ is the Lagrange multiplier, a constant independent of $x$.
Equation (\ref{ie}) is transformed into the following singular integral equation by taking a derivative with respect to $x$:
\begin{equation}\label{sie}
\mathrm{v}'(x)-2 P\int_{0}^{b}\frac{\sigma(y)}{x-y}d y=0,\qquad x\in (0,b),
\end{equation}
where $P$ denotes the Cauchy principal value. Multiplying by $\sqrt{\frac{x}{b-x}}$ on both sides of (\ref{sie}) and integrating from $0$ to $b$ with respect to $x$, we obtain
\begin{equation}\label{sup2}
\frac{1}{2\pi}\int_{0}^{b}\sqrt{\frac{x}{b-x}}\:\mathrm{v}'(x)d  x=n.
\end{equation}
The endpoint $b$ of the support of the equilibrium density is determined by equation (\ref{sup2}).

Taking a partial derivative with respect to $n$ in (\ref{fe}) and using the normalization condition in (\ref{con}) and (\ref{ie}), we have the relation
\begin{equation}\label{pd}
\frac{\partial F[\s]}{\partial n}=A.
\end{equation}
Furthermore, it was shown in \cite{ci1997} that the recurrence coefficients have the following asymptotic behavior as $n\rightarrow\infty$:
\begin{equation}\label{al}
\alpha_n=\frac{b}{2}+O\left(\frac{\partial^2 A}{\partial t \partial n}\right),
\end{equation}
\begin{equation}\label{beta}
\beta_n=\left(\frac{b}{4}\right)^2\left(1+O\left(\frac{\partial^3 A}{\partial n^3}\right)\right).
\end{equation}

Substituting (\ref{vp2}) into (\ref{sup2}), we obtain an important algebraic equation for $b>0$:
\be\label{eq}
b - 2\alpha  - 2\lambda  + 2\lambda\sqrt {\frac{t}{{b + t}}} = 4n.
\ee
The equation (\ref{eq}) has a unique positive solution and the large $n$ series expansion reads
\begin{align}
b =&\:4n + 2 (\alpha  +  \lambda)  - \dfrac{\lambda\sqrt t }{ n^{1/2} } + \dfrac{\lambda\sqrt t \left[t + 2\left(\alpha  + \lambda \right)\right]}{8 n^{3/2}}-\dfrac{\lambda ^2 t }{8 n^2}-\dfrac{{3 \lambda\sqrt t {{\left[ {t + 2\left(\alpha  + \lambda \right)} \right]}^2}}}{128 n^{5/2}}\no\\[8pt]
&+ \dfrac{\lambda ^2 t \left[t + 2\left( {\alpha  + \lambda } \right)\right]}{16{n^3}} + O(n^{ - 7/2}).\no
\end{align}
It follows that
\begin{align}\label{b2}
\dfrac{b}{2}=&\:2n +  \alpha  +  \lambda  - \dfrac{\lambda\sqrt t }{ 2n^{1/2} } + \dfrac{\lambda\sqrt t \left[t + 2\left(\alpha  + \lambda \right)\right]}{16 n^{3/2}}-\dfrac{\lambda ^2 t }{16 n^2}-\dfrac{{3 \lambda\sqrt t {{\left[ {t + 2\left(\alpha  + \lambda \right)} \right]}^2}}}{256 n^{5/2}}\no\\[8pt]
&+ \dfrac{\lambda ^2 t \left[t + 2\left( {\alpha  + \lambda } \right)\right]}{32{n^3}} + O(n^{ - 7/2})
\end{align}
and
\begin{align}\label{b4}
\left(\dfrac{b}{4}\right)^2=&\:n^2+ n(\alpha +\lambda) -\dfrac{\lambda \sqrt{t}\: n^{1/2}}{2}+\dfrac{(\alpha +\lambda)^2}{4}+\dfrac{\lambda  \sqrt{t} [t-2 (\alpha +\lambda)]}{{16 n^{1/2}}}\no\\[8pt]
&-\dfrac{\lambda \sqrt{t} \left[3 t^2+4 t (\alpha +\lambda ) -4 (\alpha +\lambda )^2\right]}{{256{n^{3/2}}}}+\dfrac{\lambda ^2 t^2}{64 n^2}+O(n^{ - 5/2}).
\end{align}

Multiplying by $\frac{1}{\sqrt{(b-x)x}}$ on both sides of (\ref{ie}) and integrating with respect
to $x$ from $0$ to $b$, and following the similar computations in \cite[p. 406--407]{Min2021}, we find the Lagrange multiplier $A$ is
\begin{align}
A =& \dfrac{1}{\pi }\int_0^b {\dfrac{\mathrm{v}(x)}{\sqrt {(b - x)x}}} dx - 2n\ln\frac{b}{4}\no\\
            =&\frac{b}{2}-(2n+\alpha) \ln \frac{b}{4}-\la\ln \frac{t}{4}   - 2\lambda \ln \left( {\sqrt {\dfrac{b + t}{t}}  + 1} \right).\no
\end{align}
The large $n$ series expansion reads
\begin{align}\label{a}
A =&- 2n \ln n + 2n- (\alpha  + \lambda ) \ln n - \dfrac{\lambda \sqrt t}{ n^{1/2}} - \dfrac{(\alpha +\lambda )^2}{4n} + \dfrac{\lambda  \sqrt{t}\: (6 \alpha +6 \lambda +t)}{24 n^{3/2}}\no\\[8pt]
& + \dfrac{2 \alpha ^3 + 6 \alpha ^2\lambda  + 3(2\alpha-t) \lambda ^2 + 2 \lambda ^3}{48 n^2} + O(n^{- 5/2}).
\end{align}
From (\ref{pd}) and (\ref{a}), we see that the free energy $F[\s]$ has the large $n$ asymptotic expansion
\begin{small}
  \begin{align}\label{F}
F[\s]=&-n^2 \ln n+\dfrac{3}{2} n^2-(\alpha +\lambda )n \ln n +(\alpha +\lambda) n -2 \lambda  \sqrt t\: n^{1/2} -\dfrac{(\alpha +\lambda )^2 \ln n}{4} +C\no\\[8pt]
&-\dfrac{\lambda  \sqrt{t}\: (6 \alpha +6 \lambda +t)}{12 n^{1/2}}-\dfrac{2 \alpha ^3 + 6 \alpha ^2\lambda  + 3(2\alpha-t) \lambda ^2 + 2 \lambda ^3}{48 n}+O(n^{-3/2}),
\end{align}
\end{small}
where $C$ is an undetermined constant independent of $n$.

With these ingredients in hand, we are now ready to derive the asymptotic expansions of the recurrence coefficients as $n\rightarrow\infty$ for fixed $t>0$.
\begin{theorem}\label{thrc}
The recurrence coefficients $\alpha_n(t)$ and $\beta_{n}(t)$ have the following large $n$ expansions:
\begin{align}
\alpha_n(t)=&\:2 n+1+\alpha +\lambda -\dfrac{\lambda  \sqrt{t}}{2 n^{1/2}}+\dfrac{\lambda \left[4 t^2+8 t (\alpha +\lambda +1)+4 \alpha ^2-1\right]}{64 \sqrt{t}\: n^{3/2}}\no\\[5pt]
&-\dfrac{\lambda ^2 \left(4 t^2-4 \alpha ^2+1\right)}{64 t n^2}+O(n^{-5/2}),\label{aa}\\[8pt]
\beta_n(t)=&\:n^2 +(\alpha +\lambda )n-\dfrac{\lambda \sqrt{t}\:n^{1/2}}{2} +\dfrac{\lambda  (2 \alpha +\lambda )}{4} +\dfrac{\lambda  \left[4 t^2-8 t (\alpha +\lambda )-12 \alpha ^2+3\right]}{64 \sqrt{t}\: n^{1/2} }  \no\\[5pt]
&+\dfrac{\lambda ^2 \left(1-4 \alpha ^2\right) }{32 t n}+O(n^{-3/2}).\label{ab}
\end{align}
\end{theorem}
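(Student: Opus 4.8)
The plan is to upgrade the leading-order Coulomb fluid data into full asymptotic series by feeding an ansatz into the exact difference equations of Theorem~\ref{th1}. The estimates (\ref{al})--(\ref{beta}), together with the expansions (\ref{b2}) for $b/2$ and (\ref{b4}) for $(b/4)^2$, establish the coarse behaviour $\alpha_n=2n+O(1)$ and $\beta_n=n^2+O(n)$ and, more importantly, reveal that the corrections organize themselves in half-integer powers of $n$ with coefficients built from $\sqrt t$. Accordingly I would posit
$$
\alpha_n(t)=2n+\sum_{k\ge0}\frac{c_k(t)}{n^{k/2}},\qquad
\beta_n(t)=n^2+\sum_{k\ge-2}\frac{d_k(t)}{n^{k/2}},
$$
where the terms $k=-2$ and $k=-1$ in the second sum supply the $n$ and $n^{1/2}$ contributions already visible in (\ref{b4}), and treat $c_k(t),d_k(t)$ as unknowns to be pinned down one order at a time.

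The core computation is to substitute this ansatz into the two relations (\ref{d11}) and (\ref{d12}). Since both involve the neighbours $\alpha_{n\pm1}$ and $\beta_{n\pm1}$, I would first re-expand each shifted coefficient about $n$ using the binomial series, e.g.\ $(n\pm1)^{-k/2}=n^{-k/2}(1\mp k/(2n)+\cdots)$, turning every term into a Laurent series in $n^{1/2}$. Collecting equal powers of $n$ and setting each coefficient to zero then yields a cascade of equations: the top orders reproduce the leading forms $2n$ and $n^2$, and each subsequent order isolates a new coefficient (or a coupled pair $c_k,d_k$) linearly in terms of those already found, so the system can be solved recursively. The Coulomb fluid expansions (\ref{b2}) and (\ref{b4}) supply the anchoring values at the first orders that make this recursion determinate, and the differential--difference relation (\ref{tb1}), which links $\beta_n'(t)$ to $\alpha_{n-1}-\alpha_n$, offers an independent consistency check on the $t$-dependence of the coefficients.

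I expect the main difficulty to be equal parts bookkeeping and conceptual. On the bookkeeping side, reaching the advertised precision---$O(n^{-5/2})$ for $\alpha_n$ and $O(n^{-3/2})$ for $\beta_n$---forces the shift expansions to be carried several orders deep and the two equations to be matched simultaneously at each half-integer power, with delicate cancellations among the $\sqrt t$-dependent pieces. Conceptually, the subtle point is that the exact difference system genuinely corrects the naive Coulomb fluid prediction: the constant term of $\alpha_n$ turns out to be $1+\alpha+\lambda$ rather than the $\alpha+\lambda$ read off from $b/2$ in (\ref{b2}), and the constant term of $\beta_n$ is $\lambda(2\alpha+\lambda)/4$ rather than the $(\alpha+\lambda)^2/4$ read off from $(b/4)^2$ in (\ref{b4}). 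These shifts originate from the action of the translations $n\mapsto n\pm1$ on the dominant parts $2n$ and $n^2$, so the crux is to verify that the order-by-order matching is self-consistent and that (\ref{b2})--(\ref{b4}) fix precisely the leading corrections while the discrete system determines everything beyond them.
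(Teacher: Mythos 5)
Your proposal is correct and follows essentially the same route as the paper: the authors likewise read the half-integer-power ansatz (with leading terms $2n$ and $n^2$) off the Coulomb fluid data (\ref{al}), (\ref{beta}), (\ref{b2}), (\ref{b4}), substitute it into the discrete system (\ref{ds1}) of Theorem~\ref{th1}, and determine the coefficients recursively by matching powers of $n$ after expanding the shifted indices. The only cosmetic difference is that the paper keeps the a priori possible $n^{1/2}$ term in $\alpha_n$ and $n^{3/2}$ term in $\beta_n$ in the ansatz and then finds their coefficients vanish, whereas you drop them at the outset on the strength of (\ref{b2})--(\ref{b4}).
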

\begin{proof}
Taking account of (\ref{al}), (\ref{beta}), (\ref{b2}),  (\ref{b4}) and (\ref{a}), we find that $\al_n$ and $\bt_{n}$ have the large $n$ expansions of the forms
\begin{equation}\label{al1}
\alpha _n \sim a_{- 2} n + a_{- 1} n^{1/2} + {a_0} + \sum\limits_{j = 1}^\infty  {\frac{a_j}{n^{j/2}}},
\ee
\be\label{be1}
{\beta _n} \sim b_{- 4} n^2+ b_{- 3} n^{3/2} + b_{- 2} n +b_{- 1}n^{1/2} + {b_0} + \sum\limits_{j = 1}^\infty  {\frac{b_j}{n^{j/2}}},
\end{equation}
where
$$
{a_{- 2}} = 2,\qquad\qquad b_{-4}=1.
$$
Substituting (\ref{al1}) and (\ref{be1}) into the discrete system satisfied by the recurrence coefficients in (\ref{ds1}), and taking a large $n$ limit, we obtain the expansion coefficients $a_j$ and  $b_j$ recursively by equating coefficients of powers of $n$ on both sides. The first few terms are
$$
\begin{aligned}
&a_{-1}=0,\qquad b_{-3}=0;\qquad  a_0=1 + \alpha  + \lambda ,\qquad  b_{-2}= \alpha  + \lambda;\\
&a_1= - \frac{\lambda \sqrt t}{2},\qquad {b_{ - 1}} =  - \frac{\lambda \sqrt t }{2};\qquad a_2=0,\qquad b_0=\frac{\lambda \left( 2\alpha  + \lambda \right)}{4};\\
&a_3= \frac{\lambda \left[4 t^2 + 8t\left(\alpha  + \lambda+1 \right)+ 4 \alpha ^2-1\right]}{64\sqrt t },\qquad b_1=\frac{\lambda \left[ 4 t^2 - 8t\left( \alpha  + \lambda \right)- 12 \alpha ^2+3\right]}{64\sqrt t};\\
&a_4= - \frac{\lambda ^2 \left( 4 t^2 - 4 \alpha ^2+1\right)}{64 t},\qquad b_2=\frac{\lambda ^2 \left(1-4 \alpha ^2\right)}{32 t}.
\end{aligned}
$$
This completes the proof.
\end{proof}

Next, based on the large $n$ asymptotic expansions of the recurrence coefficients, we continue to derive the large $n$ expansions of the sub-leading coefficient $\mathrm p(n,t)$, the quantity $H_n(t)= t\frac{d}{{dt}}\ln {D_n}(t)$, the Hankel determinant $D_n(t)$ and the normalized constant $h_n(t)$ for fixed $t>0$. Note that the asymptotic expansions of the recurrence coefficients in Theorem \ref{thrc} are singular at the point $t=0$, which leads to the fact that all the other quantities discussed in the following have the same feature.

\begin{theorem}
The large $n$ asymptotic expansion of $\mathrm{p}(n,t)$ is given by
\begin{align}\label{pna}
\mathrm{p}(n,t)=&- {n^2} - (\alpha  + \lambda )n + \lambda \sqrt t\: n^{1/2}  - \dfrac{\la(2t+2\al+\la)}{4} + \dfrac{\lambda \left[4t^2+8(\al+\la)t+ 4 \alpha^2 - 1\right]}{32\sqrt t \: n^{1/2}}\no\\
&- \dfrac{\lambda ^2\left(4 t^2- 4 \alpha ^2+1\right)}{64 t n} + O\left( n^ {- 3/2}\right).
\end{align}
\end{theorem}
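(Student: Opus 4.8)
The plan is to bypass the telescoping identity (\ref{sum}), since summing the asymptotic series for $\alpha_j$ via Euler--Maclaurin would introduce an undetermined additive constant that would then have to be pinned down separately. Instead I would work entirely algebraically from the relation (\ref{pp}), namely $\mathrm{p}(n,t)=-\beta_n-t r_n(t)$, together with the closed form (\ref{s25}) for $t r_n(t)$. Substituting (\ref{s25}) into (\ref{pp}) and clearing the common factor $2n+\alpha+\lambda$, the two relations collapse to the single rational identity
\[
\mathrm{p}(n,t)=\frac{\beta_n\,(2n+\alpha+\lambda-t-\alpha_n-\alpha_{n-1})+n(n+\alpha)t}{2n+\alpha+\lambda},
\]
whose right-hand side depends only on $\alpha_n$, $\alpha_{n-1}$ and $\beta_n$, each already known asymptotically from Theorem \ref{thrc}. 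This step is purely algebraic and involves no summation constant.

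The second step is to feed the expansions (\ref{aa}) and (\ref{ab}) into this identity. The quantities $\alpha_n$ and $\beta_n$ enter directly, whereas $\alpha_{n-1}$ must be produced by replacing $n$ with $n-1$ in (\ref{aa}) and re-expanding every factor $(n-1)^{-j/2}$ as $n^{-j/2}(1-1/n)^{-j/2}$ through the binomial series, collecting the outcome into a single power series in $n^{-1/2}$. Forming the combination $2n+\alpha+\lambda-\alpha_n-\alpha_{n-1}$ then cancels the leading $2n$ terms, so this factor begins at order $n^{-1/2}$; multiplying by $\beta_n$ and adding $n(n+\alpha)t$ yields the numerator as a series in $n^{-1/2}$. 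The final step is the division by $2n+\alpha+\lambda$, carried out by writing $(2n+\alpha+\lambda)^{-1}=\frac{1}{2n}(1+\frac{\alpha+\lambda}{2n})^{-1}$ and expanding, after which equating coefficients of like powers of $n^{1/2}$ reads off the terms of (\ref{pna}) through $O(n^{-1})$ with error $O(n^{-3/2})$.

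I expect the main obstacle to be purely one of bookkeeping rather than of ideas. Because the numerator is of order $n^3$ while the answer begins at $-n^2$, the top several orders must cancel exactly, so every term displayed in (\ref{aa}) and (\ref{ab}) is genuinely needed; a short order count shows these expansions are in fact exactly sufficient, the $n^{-2}$ term of $\alpha_n$ and the $n^{-1}$ term of $\beta_n$ being the last ones that matter. The delicate point is that the re-expansion of $\alpha_{n-1}$ must be performed consistently to the same absolute order, since a dropped $n^{-3/2}$ contribution there would corrupt the $O(n^{-1})$ coefficient $-\lambda^2(4t^2-4\alpha^2+1)/(64t)$. A convenient internal consistency check comes from the unmerged form $\mathrm{p}(n,t)=-\beta_n-t r_n(t)$: the $O(1)$ piece of $-\beta_n$ is $-b_0=-\lambda(2\alpha+\lambda)/4$, so the $O(1)$ piece of $-t r_n(t)$ must equal $-\lambda t/2$ in order to reproduce the stated constant term $-\lambda(2t+2\alpha+\lambda)/4$ of (\ref{pna}).
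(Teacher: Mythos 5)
Your proposal is essentially the paper's own proof: the authors likewise combine (\ref{pp}) with (\ref{s25}) to write $\mathrm{p}(n,t)=\bigl[n(n+\alpha)t+\beta_n(2n+\alpha+\lambda-t-\alpha_n-\alpha_{n-1})\bigr]/(2n+\alpha+\lambda)$ and then substitute the expansions (\ref{aa}) and (\ref{ab}). One slip in your bookkeeping: since $\alpha_n+\alpha_{n-1}=4n+2(\alpha+\lambda)+O(n^{-1/2})$, the combination $2n+\alpha+\lambda-t-\alpha_n-\alpha_{n-1}$ equals $-2n-(\alpha+\lambda)-t+O(n^{-1/2})$ and is therefore $O(n)$, not $O(n^{-1/2})$ — which is precisely what makes the numerator $O(n^{3})$ and the quotient $-n^{2}+\cdots$, as you correctly state elsewhere; the error budget (needing $\alpha_n$ to $O(n^{-5/2})$ and $\beta_n$ to $O(n^{-3/2})$) still closes, so this misstatement does not affect the validity of the argument.
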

\begin{proof}
Combining (\ref{s25}) and (\ref{pp}), we can express $\mathrm{p}(n,t)$ in terms of the recurrence coefficients:
$$
\mathrm{p}(n,t)=\frac{n(n+\al)t+\bt_n(2n+\al+\la-t-\al_n-\al_{n-1})}{2n+\al+\la}.
$$
Substituting (\ref{aa}) and (\ref{ab}) into the above and taking a large $n$ limit, we obtain the result in (\ref{pna}).
\end{proof}
\begin{theorem}\label{hnt}
The quantity ${H _n}(t)= t\frac{d}{{dt}}\ln {D_n}(t)$ has the following expansion as $n\rightarrow\infty$:
\begin{align}\label{dd}
{H _n}(t)=&\:\lambda \sqrt t\: n^{1/2}  - \dfrac{\la(2t+2\al+\la)}{4} + \dfrac{\lambda \left[4t^2+8(\al+\la)t+ 4 \alpha^2 - 1\right]}{32\sqrt t \: n^{1/2}}- \dfrac{\lambda ^2\left(4 t^2- 4 \alpha ^2+1\right)}{64 t n} \no\\
&+ O\left( n^ {- 3/2}\right).
\end{align}
\end{theorem}
\begin{proof}
Substituting (\ref{pna}) into (\ref{ex1}) gives the desired result.
\end{proof}
Since the above asymptotic expansion of $H_n(t)$ is singular at $t=0$, one can not derive the large $n$ asymptotic expansion of $\ln D_n(t)$ by integrating $H_n(t)/t$ from 0 to $t$. The derivation of the large $n$ asymptotics of the Hankel determinant will be more difficult.
\begin{theorem}\label{thm1}
The Hankel determinant $D_n(t)$ has the following asymptotic expansion as $n\rightarrow\infty$:
\begin{align}\label{dnt1}
\ln D_n(t)=  &\: {n^2}\ln n- \dfrac{3n^2}{2} + \left(\alpha  + \lambda\right)n\ln n- [\alpha+\lambda-\ln (2\pi)]n+ 2 \lambda \sqrt t\:   n^{1/2}\no\\[8pt]
&+ \dfrac{ 6\alpha ^2+6\alpha \lambda  + 3 \lambda ^2-2 }{12}\ln n-c_0+ \dfrac{\lambda \left[4 t^2 + 24 t\left(\alpha  + \lambda\right)- 12 \alpha ^2+3 \right]}{48\sqrt t \:   n^{1/2}} \no\\[8pt]
&- \dfrac{12 \lambda ^2 t^2 - 8 t(\alpha +\lambda ) \left(4 \alpha ^2+2 \alpha  \lambda +\lambda ^2-2\right)+ 3 \lambda ^2 \left(4 \alpha ^2-1\right)}{192 t n}
+O(n^{-3/2}),
\end{align}
where the constant term reads
$$
-c_0=2 \zeta'(-1)- \ln G( \alpha+1 )+ \dfrac{\alpha}{2}\ln (2\pi)-\frac{\la}{4}\left[2t+\la\ln 4+(2\al+\la)\ln t\right],
$$
where $\zeta'(\cdot)$ is the derivative of the Riemann zeta function and $G(\cdot)$ is the Barnes $G$-function \cite{ew}.
\end{theorem}
\begin{proof}
Let
$$
F_n(t):=-\ln D_n(t)
$$
be the ``free energy''. Chen and Ismail \cite{ci1997} showed that $F_n(t)$ can be approximated by the free energy $F[\s]$ in (\ref{fe}) for sufficiently large $n$. Chen, Ismail and Van Assche \cite{ChenIsmail1998} also stated that the approximation is very accurate and effective, and applied the approach to three different examples. More recently, Basor, Chen and Haq \cite[Section 7.2]{Basor2015} used this approach to derive the large $n$ asymptotic expansion of the Hankel determinant for a modified Jacobi weight and also used other methods to show the results are consistent. In view of (\ref{F}), we assume that $F_n(t)$ has the large $n$ expansion form:
\be\label{fna}
{F_n}(t) =  {c_7}{n^2}\ln n + {c_6}n\ln n+ {c_5}\ln n+\sum_{j =  -\infty}^{4} {c_{j}}{n^{j/2}},
\ee
where $c_j,\;j=7, 6, 5, \ldots$ are the expansion coefficients to be determined.
Using (\ref{bd}) we have
\begin{equation}\label{dc}
\ln\bt_n=2 F_n(t)-F_{n+1}(t)-F_{n-1}(t).
\end{equation}

Substituting (\ref{ab}) and (\ref{fna}) into equation (\ref{dc}) and taking a large $n$ limit, we obtain the expansion coefficients $c_j$ (except $c_2$ and $c_0$) by equating coefficients of powers of $n$ on both sides.
The asymptotic expansion of $F_n(t)$ as $n\rightarrow\infty$ is then given by
\begin{align}
{F_n}(t) =  &- {n^2}\ln n - \left(\alpha  + \lambda\right)n\ln n- \dfrac{ 6\alpha ^2+6\alpha \lambda  + 3 \lambda ^2-2 }{12}\ln n+ \dfrac{3n^2}{2} + {c_2}n\no\\[8pt]
&- 2 \lambda \sqrt t\:   n^{1/2}+ {c_0}- \dfrac{\lambda \left[4 t^2 + 24 t\left(\alpha  + \lambda\right)- 12 \alpha ^2+3 \right]}{48\sqrt t \:   n^{1/2}} \no\\[8pt]
&+ \dfrac{12 \lambda ^2 t^2 - 8 t(\alpha +\lambda ) \left(4 \alpha ^2+2 \alpha  \lambda +\lambda ^2-2\right)+ 3 \lambda ^2 \left(4 \alpha ^2-1\right)}{192 t n}
+O(n^{-3/2}),\no
\end{align}
where ${c_2}$ and ${c_0}$ are undetermined. It follows that
\begin{align}
\ln D_n(t)=  &\: {n^2}\ln n - \dfrac{3n^2}{2}+ \left(\alpha  + \lambda\right)n\ln n- {c_2}n+ 2 \lambda \sqrt t\:   n^{1/2}\no\\[8pt]
&+ \dfrac{ 6\alpha ^2+6\alpha \lambda  + 3 \lambda ^2-2 }{12}\ln n- {c_0}+ \dfrac{\lambda \left[4 t^2 + 24 t\left(\alpha  + \lambda\right)- 12 \alpha ^2+3 \right]}{48\sqrt t \:   n^{1/2}} \no\\[8pt]
&- \dfrac{12 \lambda ^2 t^2 - 8 t(\alpha +\lambda ) \left(4 \alpha ^2+2 \alpha  \lambda +\lambda ^2-2\right)+ 3 \lambda ^2 \left(4 \alpha ^2-1\right)}{192 t n}
+O(n^{-3/2}).\no
\end{align}

To determine the two constants $c_2$ and $c_0$, we use a result from the work of Chen and Lawrence \cite{Chen1998}. Following \cite{Chen1998}, we have the large $n$ leading term asymptotics
$$
\frac{D_n[w]}{D_n[w_0]}\sim \exp\left(-\lambda \int_{\tilde{a}}^{\tilde{b}} f(x)\widetilde{\sigma}_{0}(x)dx-\frac{\lambda}{2}\int_{\tilde{a}}^{\tilde{b}} f(x)\varrho(x)dx\right),
$$
where
$$
\widetilde{\sigma}_{0}(x)=\frac{\sqrt{(\tilde{b}-x)(x-\tilde{a})}}{2\pi^{2}}P\int_{\tilde{a}}^{\tilde{b}}\frac{\mathrm{v}_{0}'(y) dy}{(y-x)\sqrt{(\tilde{b}-y)(y-\tilde{a})}}
$$
and
$$
\varrho(x)=\frac{\lambda}{2\pi^{2}\sqrt{(\tilde{b}-x)(x-\tilde{a})}}P\int_{\tilde{a}}^{\tilde{b}}\frac{\sqrt{(\tilde{b}-y)(y-\tilde{a})}}{y-x}f'(y)dy,
$$
and $\mathrm{v}_{0}(x)=-\ln w_0(x)=x-\al\ln x,\; f(x)=-\ln (x+t),\;x\in \mathbb{R}^{+}$ for our problem. The endpoints $\tilde{a}$ and $\tilde{b}$ are determined by the following two equations:
$$
\int_{\tilde{a}}^{\tilde{b}}\frac{\mathrm{v}_{0}'(x)}{\sqrt{(\tilde{b}-x)(x-\tilde{a})}}dx=0,
$$
$$
\frac{1}{2\pi}\int_{\tilde{a}}^{\tilde{b}}\frac{x\:\mathrm{v}_{0}'(x)}{\sqrt{(\tilde{b}-x)(x-\tilde{a})}}dx=n.
$$

It can be found that
$$
\widetilde{\sigma}_{0}(x)=\frac{\sqrt{(\tilde{b}-x)(x-\tilde{a})}}{2 \pi  x },
$$
$$
\varrho(x)=\frac{\lambda}{2 \pi  \sqrt{(\tilde{b}-x)(x-\tilde{a})}}\left(1-\frac{\sqrt{(\tilde{a}+t) (\tilde{b}+t)}}{x+t}\right),\\[5pt]
$$
and $\tilde{a}=2 n+\alpha -2 \sqrt{n(n+\alpha)  },\;\; \tilde{b}=2 n+\alpha +2 \sqrt{n(n+\alpha)  }$. After some elementary computations, we obtain
\begin{align}\label{ra}
\frac{D_n[w]}{D_n[w_0]}\sim &\exp\Big[\la n \ln n-\la n+2\la \sqrt{t}\:n^{\frac{1}{2}}+\frac{\la}{4}(2\al+\la)\ln n\no\\
&-\frac{\la}{4}\big(2t+\la\ln 4+(2\al+\la)\ln t\big)+o(1)\Big].
\end{align}

For the classical Laguerre weight $w_0(x) = {x^\alpha }{\mathrm e^{ - x}},\;x\in \mathbb{R}^{+},\alpha > -1$, it is known that the corresponding Hankel determinant has the closed-form expression \cite[p. 321]{Mehta}
\begin{align}
D_n[w_0]&=\det\left(\int_{0}^{\infty}x^{i+j}{x^\alpha }{\mathrm e^{ - x}}dx\right)_{i,j=0}^{n-1}\no\\
&=\frac{1}{n!}\prod_{j=1}^{n}\Gamma(j+1)\Gamma(j+\al)\no\\
&=\frac{G(n+1)G(n+\al+1)}{G(\al+1)},\no
\end{align}
where $G(z)$ is the Barnes $G$-function that satisfies the relation \cite{ew,Voros}
$$
G(z+1)=\Gamma(z)G(z),\qquad G(1)=1
$$
and it has the asymptotic expansion as $z\rightarrow+\infty$:
$$
\ln G(z+1 ) =z ^2 \left(\frac{\ln z }{2} - \frac{3}{4}\right) +  \frac{z}{2} \ln (2\pi)- \frac{\ln {z}}{12} + \zeta'(-1) + O(z ^{ - 2}),\qquad z  \to  + \infty,
$$
where $\zeta'(\cdot)$ is the derivative of the Riemann zeta function. It follows that
\begin{align}\label{as2}
\ln D_n[w_0] = &\:{n^2}\ln n - \dfrac{3}{2}{n^2}+ \alpha n \ln n- [ \alpha-\ln (2\pi)]n+ \dfrac{3\alpha ^2- 1}{6}\ln n\no\\
 &+ \dfrac{\alpha}{2}\ln (2\pi)+ 2 \zeta'(-1)- \ln G( \alpha+1 ) +O(n^{-1}).
\end{align}
Combining (\ref{ra}) and (\ref{as2}) yields
$$
-c_2=-\alpha-\lambda+\ln (2\pi)
$$
and
$$
-c_0=2 \zeta'(-1)- \ln G( \alpha+1 )+ \dfrac{\alpha}{2}\ln (2\pi)-\frac{\la}{4}\left[2t+\la\ln 4+(2\al+\la)\ln t\right].
$$
This completes the proof.
\end{proof}
\begin{remark}
The expansion form in (\ref{fna}) can be justified from the combination of (\ref{ra}) and (\ref{as2}). In addition, by taking a derivative with respect to $t$ in (\ref{dnt1}) and substituting it into (\ref{Pin}), one would find that the result is consistent with (\ref{dd}) in Theorem \ref{hnt}.
\end{remark}
\begin{remark}
Recently, Charlier and Gharakhloo \cite{CG} studied the large $n$ asymptotics of Hankel determinants for Laguerre-type and Jacobi-type weights. Note that
the weight of Charlier and Gharakhloo \cite[(1.2)]{CG} is a varying weight with the potential $n V(x)$, and our weight in (\ref{weig}) is independent of $n$. Therefore, our weight is not a special case of \cite[(1.2)]{CG} and can not be transformed to their case either. As a result, the large $n$ expansion form of our Hankel determinant in (\ref{dnt1}) is different from \cite[Theorem 1.2]{CG} since we have the fractional powers of $n$.
\end{remark}
\begin{theorem}
The normalized constant $h_n(t)$ has the large $n$ asymptotic expansion
\begin{align}
\ln h_n(t)=&\:2n\ln n - 2n + (1 + \alpha  + \lambda )\ln n + \ln(2\pi) + \dfrac{{\lambda \sqrt t }}{n^{1/2}} \no\\[8pt]
&+\frac{6\al(\al+\la+1)+3\la(\la+2)+2}{{12n}}+O(n^{-3/2}).\no
\end{align}
\end{theorem}
\begin{proof}
Using (\ref{d_nt1}), we have
\begin{align}\label{hn}
 \ln h_n(t)=\ln D_{n+1}(t)-\ln D_n(t).
\end{align}
Substituting (\ref{dnt1}) into (\ref{hn}) and taking a large $n$ limit, we obtain the desired result.
\end{proof}

At the end of this section, we would like to mention that the large $n$ asymptotic expansion of the Hankel determinant in Theorem \ref{thm1} can be used to study the asymptotics of the moment generating function of the mutual information for single-user MIMO systems when the number of antennas becomes large. Then, the asymptotics of the outage probability and the upper bound on the error probability can be calculated via the asymptotics of the moment generating function; see \cite[Section III]{cm}.

\section{Long-time asymptotics}
In this section, we consider the long-time asymptotics of the  recurrence coefficients $\alpha_n(t)$ and $\beta_n(t)$, the quantity $H_n(t)= t\frac{d}{{dt}}\ln {D_n}(t)$, the sub-leading  coefficient $\mathrm{p}(n,t)$, the Hankel determinant $D_n(t)$ and the normalized constant $h_n(t)$ when $t  \to  + \infty$ and $n$ is fixed.
To derive the main results, we first give some relations between the recurrence coefficients $\al_n(t), \bt_n(t)$ and the quantity $H_n(t)$ in the following lemma.
\begin{lemma}
We have the relations
\be\label{aa1}
\alpha_n(t)=2n+1+\alpha+\lambda+{H_n}(t)-{H _{n+1}}(t),
\ee
\be\label{bb1}
\beta_n(t)=n(n+\alpha+\lambda)-{H _n}(t)+t{H '_n}(t),
\ee
\be\label{an+1}
\alpha_{n}(t)=\alpha_{n-1}(t)+2-t\frac{d}{dt}\ln \beta_{n}(t).
\ee
\end{lemma}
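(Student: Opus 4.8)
The plan is to derive all three relations directly from the finite-$n$ identities already established in Section 2, so no asymptotic input is needed; each is a one- or two-line manipulation of existing formulas.

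First I would prove (\ref{aa1}) by eliminating the auxiliary quantity $R_n(t)$ between two representations. On one hand, (\ref{tR}) gives $t R_n(t) = H_{n+1}(t) - H_n(t)$; on the other, (\ref{alpha2}) gives $t R_n(t) = 2n+1+\alpha+\lambda-\alpha_n$. Equating the two right-hand sides and solving for $\alpha_n$ produces (\ref{aa1}) at once.

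Next, for (\ref{bb1}) I would begin from (\ref{ex2}), which already reads $H_n(t) = n(n+\alpha+\lambda) - \beta_n - t r_n(t)$, so that $\beta_n = n(n+\alpha+\lambda) - H_n(t) - t r_n(t)$. It then remains to identify $-t r_n(t)$ with $t H'_n(t)$, that is, to establish $H'_n(t) = -r_n(t)$. I would obtain this by differentiating the alternative closed form (\ref{ex1}), $H_n(t) = n(n+\alpha+\lambda) + \mathrm p(n,t)$, with respect to $t$ and invoking the time-evolution formula (\ref{ddt1}), $\frac{d}{dt}\mathrm p(n,t) = -r_n(t)$. Substituting $tH'_n(t) = -t r_n(t)$ back into the displayed expression for $\beta_n$ yields (\ref{bb1}).

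Finally, (\ref{an+1}) follows from the differential-difference equation (\ref{tb1}), namely $t\beta'_n(t) = \beta_n(\alpha_{n-1}-\alpha_n+2)$: dividing both sides by $\beta_n>0$ (positivity is guaranteed by (\ref{be})) turns the left side into $t\frac{d}{dt}\ln\beta_n(t)$, and rearranging gives the claim. The only conceptual point in the whole argument — and hence the most delicate step — is the proof of (\ref{bb1}), where one must use the two distinct closed forms (\ref{ex1}) and (\ref{ex2}) for $H_n(t)$ in tandem: (\ref{ex2}) supplies the $r_n$-term, while (\ref{ex1}), once differentiated via (\ref{ddt1}), is what converts that term into the derivative $tH'_n(t)$. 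Beyond this bookkeeping, no genuine obstacle arises.
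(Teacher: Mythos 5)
Your proposal is correct and follows exactly the paper's own route: equating (\ref{alpha2}) with (\ref{tR}) for the first relation, combining (\ref{ex2}) with $H_n'(t)=\frac{d}{dt}\mathrm{p}(n,t)=-r_n(t)$ (obtained from (\ref{ex1}) and (\ref{ddt1})) for the second, and rewriting (\ref{tb1}) for the third. No gaps; nothing to add.
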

\begin{proof}
Equating the right hand sides of (\ref{alpha2}) and (\ref{tR}), we have (\ref{aa1}). From (\ref{ddt1}) and (\ref{ex1}), we find
\be\label{dphi}
\dfrac{d}{dt}{H _n}(t)=\dfrac{d}{dt}\mathrm{p}(n,t)=-r_n(t).
\ee
The combination of (\ref{ex2}) and(\ref{dphi}) gives (\ref{bb1}). Finally, equation (\ref{an+1}) is from (\ref{tb1}).
\end{proof}
\begin{theorem}\label{th1t}
As $t  \to  + \infty$, the recurrence coefficients $\alpha_n(t)$ and $\beta_n(t)$ have the asymptotic expansions
\begin{align}
\alpha_n(t)=&\:2 n+ \alpha+ 1+ \dfrac{\lambda( 2 n+ \alpha +1)}{t}-\dfrac{\la\left[ 6 n^2+2n (3 \alpha - \lambda +3)+(\alpha +1) (\alpha -\lambda +2) \right]}{t^2}\no\\
&+O(t^{-3}),\label{d21}\\
\beta _n(t) =&\:n\left( {n + \alpha } \right) + \dfrac{{2\lambda n \left( {n + \alpha } \right) }}{t} - \dfrac{{3 \lambda n \left( {n + \alpha } \right)\left( {2n + \alpha  - \lambda } \right)}}{{{t^2}}}
+O(t^{-3}).\label{d22}
\end{align}
As $t  \to  + \infty$, the quantity $H_n(t)= t\frac{d}{{dt}}\ln {D_n}(t)$ has the asymptotic expansion
\be\label{Phi2}
{H _n}(t) =\lambda n  - \dfrac{{\lambda n  \left( {n + \alpha } \right) }}{t} + \dfrac{{\lambda n
 \left( {n + \alpha } \right)\left( {2n + \alpha  - \lambda } \right) }}{{{t^2}}}
+O(t^{-3}).
\ee
\end{theorem}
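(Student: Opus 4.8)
The plan is to extract the full $1/t$ expansion of $H_n(t)$ first and then recover $\alpha_n(t)$ and $\beta_n(t)$ from the algebraic relations (\ref{aa1}) and (\ref{bb1}) of the preceding lemma. The guiding heuristic is that as $t\to\infty$ the factor $(x+t)^\lambda=t^\lambda(1+x/t)^\lambda$ degenerates, up to the $n$-independent constant $t^\lambda$, to the classical Laguerre weight $x^\alpha\mathrm e^{-x}$, whose recurrence coefficients are $2n+\alpha+1$ and $n(n+\alpha)$; these are precisely the leading terms of (\ref{d21}) and (\ref{d22}), and together with $H_0(t)\equiv0$ they force $H_n\to n\lambda$ through the telescoping relation (\ref{aa1}). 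Accordingly I set $\sigma(t):=H_n(t)-n\lambda$, which tends to $0$, and seek an expansion $\sigma(t)=\sum_{j\ge1}s_j(n)\,t^{-j}$ in integer powers of $1/t$, the regularity of the perturbation of the Laguerre weight justifying this form.

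To fix the leading coefficient $s_1$ I would use the exact identity (\ref{s25}) together with (\ref{bb1}). Under the posited expansion one has $t\,r_n(t)=-t\,H_n'(t)\to0$ by (\ref{dphi}), since $H_n'=O(t^{-2})$. Inserting this, along with the leading values $\alpha_n\to2n+\alpha+1$ and $\beta_n\to n(n+\alpha)$, into (\ref{s25}) makes its $O(1)$ part collapse to $B_1(n)=2\lambda n(n+\alpha)$, where $B_1(n)/t$ denotes the $1/t$-term of $\beta_n$. Comparing with $\beta_n=n(n+\alpha)-2s_1/t+\cdots$, read off from (\ref{bb1}), yields $s_1=-\lambda n(n+\alpha)$; in particular this excludes the competing value $s_1=0$.

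Next I would substitute the ansatz into the Jimbo--Miwa--Okamoto $\sigma$-form of Painlev\'e V established earlier, with $\nu_0=0,\ \nu_1=-n,\ \nu_2=\lambda,\ \nu_3=-n-\alpha$ (so that $\nu_0+\nu_1+\nu_2+\nu_3=\lambda-2n-\alpha$ and $\nu_1\nu_2\nu_3=\lambda n(n+\alpha)$), and equate coefficients of powers of $1/t$. The lowest surviving order reproduces the quadratic $s_1\big(s_1+\lambda n(n+\alpha)\big)=0$, whose nonzero root is the value already pinned down above; every higher order is then linear in the top unknown and solves recursively, giving $s_2=\lambda n(n+\alpha)(2n+\alpha-\lambda)$ and hence the expansion (\ref{Phi2}) for $H_n(t)$ as an explicit function of $n$ and $t$. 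Feeding this expansion and its shift $n\mapsto n+1$ into (\ref{aa1}) then produces (\ref{d21}), while feeding $H_n$ and $t\,H_n'$ into (\ref{bb1}) produces (\ref{d22}); the identity (\ref{an+1}) arising from (\ref{tb1}) serves as an independent consistency check.

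The main obstacle is the branch ambiguity in the $\sigma$-form step: its squared structure renders the leading-order balance a quadratic in $s_1$, admitting the spurious root $s_1=0$ (a legitimate but different Painlev\'e transcendent) that the $\sigma$-form alone cannot exclude. Removing it is exactly the role of the second step, where the independent identity (\ref{s25}) and the estimate $r_n=O(t^{-2})$ single out $s_1=-\lambda n(n+\alpha)$. A secondary, purely bookkeeping point is that (\ref{aa1}) couples $H_n$ with $H_{n+1}$; this causes no difficulty only because the $\sigma$-form carries $n$ as an explicit parameter, so a single closed formula supplies both.
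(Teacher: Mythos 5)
Your route is genuinely different from the paper's. The paper does not touch the $\sigma$-form at all: it computes $\mu_0(t)=t^{\alpha+\lambda+1}\Gamma(\alpha+1)U(\alpha+1,\alpha+\lambda+2,t)$ explicitly, reads off the large-$t$ expansion of $H_1(t)=t\frac{d}{dt}\ln\mu_0(t)$ from the classical asymptotics of Kummer's function, obtains $\alpha_0,\beta_1,\alpha_1$ from (\ref{aa1}), (\ref{bb1}), (\ref{an+1}), and then proves all three expansions simultaneously by induction on $n$ using only those three relations. Your formal computations are correct as far as they go: the order-$t^{-2}$ balance of the $\sigma$-form does give $s_1\bigl(s_1+\lambda n(n+\alpha)\bigr)=0$, the order-$t^{-3}$ balance is linear in $s_2$ with nonzero coefficient $-4\lambda n(n+\alpha)$ and yields $s_2=\lambda n(n+\alpha)(2n+\alpha-\lambda)$, and your use of (\ref{s25}) together with $\beta_n=n(n+\alpha)-2s_1/t+\cdots$ from (\ref{bb1}) correctly selects the root $s_1=-\lambda n(n+\alpha)$. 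What your approach buys is a closed recursive scheme from a single ODE; what the paper's buys is an actual proof of existence.

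That is the genuine gap: everything in your argument is conditional on $H_n(t)-n\lambda$ \emph{having} an asymptotic expansion in integer powers of $1/t$ that may be differentiated term by term (you need differentiability both for $tH_n'$ in (\ref{bb1}) and for $t\sigma''$ in the $\sigma$-form). ``The regularity of the perturbation of the Laguerre weight'' is a heuristic, not an argument; the $\sigma$-form method can only determine the coefficients of an expansion whose existence and form are posited in advance, and a priori $\sigma$ could contain terms (e.g.\ exponentially small or logarithmic in $t$) invisible to this matching. Likewise, the limits $\alpha_n\to 2n+\alpha+1$, $\beta_n\to n(n+\alpha)$ that you invoke to break the branch ambiguity are asserted, not derived. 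Both missing ingredients are exactly what the paper's base case supplies: $H_1$ is an explicit ratio of Kummer functions with a classical, differentiable $1/t$ expansion, and every $H_n$, $\alpha_n$, $\beta_n$ is then built from it by finitely many algebraic operations and $t$-derivatives via (\ref{aa1}), (\ref{bb1}), (\ref{an+1}), which preserves that property. To close your argument you would either need to import such an existence statement (e.g.\ prove the expansions of $\mu_j(t)$ and hence of $D_n(t)$ directly from Watson's lemma applied to the moments) or simply graft the paper's induction onto your coefficient computation.
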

\begin{proof}
From the fact that $D_0(t)=1$, we have
\be\label{h0}
H_0(t)=0,
\ee
and it follows from (\ref{bb1}) that
$$
\bt_0(t)=0.
$$
Note that
\begin{align}\label{mu0}
{\mu _0}(t)&=\int_0^\infty  {{x^\alpha }} {\mathrm{e}^{ - x}}{(x + t)^\lambda }dx\no\\
&=t^{ \alpha+\lambda+1}\Gamma(\alpha+1)U(\alpha+1,  \alpha+\lambda+2, t),
\end{align}
where $U(a, b, z)$ is the Kummer function and it can be expressed in terms of the confluent hypergeometric function as \cite[Chapter 13]{Olver}
$$
U(a, b, z)=\frac{\Gamma(1-b)}{\Gamma(a-b+1)}\ _1F_1(a;b;z)+\frac{\Gamma(b-1)}{\Gamma(a)}z^{1-b}\ _1F_1(a-b+1;2-b;z).
$$
It follows that
$$
H_1(t)=t\frac{d}{dt}\ln D_1(t)=t\frac{d}{dt}\ln \mu_0(t)=\al+\la+1-\frac{t(\al+1)U(\alpha+2,  \alpha+\lambda+3, t)}{U(\alpha+1,  \alpha+\lambda+2, t)}
$$
and as $t  \to  + \infty$,
\be\label{h1}
{H _1}(t)
=\lambda  - \dfrac{{\lambda\left( {\al+1 } \right) }}{t} + \dfrac{{\lambda\left( {\al+1 } \right)\left( { \alpha  - \lambda+2 } \right) }}{{{t^2}}} +O(t^{-3}),
\ee
which is (\ref{Phi2}) with $n=1$.
By making use of (\ref{h0}) and (\ref{h1}), we find from (\ref{aa1}) and (\ref{bb1}) that
\be\label{a0}
  {\alpha _{0}(t)} =  \alpha+1  + \dfrac{{ \lambda\left( {\alpha+1 } \right) }}{t} - \dfrac{{ \lambda\left( { \alpha+1 } \right)\left( { \alpha  - \lambda+2 } \right) }}{{{t^2}}} + O(t^{-3})
\ee
and
\be\label{b1}
  {\beta _{1}(t)} =\alpha+1 + \dfrac{{2 \lambda \left( {\alpha+1 } \right) }}{t} - \dfrac{3 \lambda  {\left( {\alpha+1 } \right)\left( { \alpha  - \lambda+2 } \right) }}{t^2} +O(t^{-3}),
\ee
which are (\ref{d21}) and (\ref{d22}) with $n=0$ and $n=1$, respectively.
Using (\ref{a0}) and (\ref{b1}), it follows from (\ref{an+1}) that
$$
{\alpha _{1}(t)} = \alpha +3+\frac{\lambda(\alpha +3) }{t} -\frac{\lambda  \left[(\al+2)(\al+7)-\la(\al+3)\right]}{t^2} + O(t^{-3}),
$$
which is (\ref{d21}) with $n=1$.

Now we can use the mathematical induction. Suppose that (\ref{d21}), (\ref{d22}) and (\ref{Phi2}) are true, then using (\ref{aa1}) we have
\begin{align}
{H _{n+1}}(t)=&\:2n+1+\alpha+\lambda+{H_n}(t)-\alpha_n(t)\no\\
=&\:\lambda\left( { n+1} \right)  - \dfrac{{\lambda\left( { n+1} \right) \left( {n + \alpha+1 } \right)}}{t} + \dfrac{{\lambda\left( {n+1} \right)\left( { n+ \alpha+1 } \right)\left( { 2 n + \alpha  - \lambda+2 } \right) }}{{{t^2}}}\no\\
&+O(t^{-3}),\no
\end{align}
which is (\ref{Phi2}) with $n\mapsto n+1$. It follows from (\ref{bb1}) that
\begin{align}\label{bn1}
 \beta_{n+1}(t)=&\:(n+1)(n+\alpha+\lambda+1)-{H _{n+1}}(t)+t{H '_{n+1}}(t)\no\\
 =&\:\left( { n+1} \right)\left( { n + \alpha+1 } \right) + \dfrac{{2\lambda\left( {n+1} \right)\left( { n+ \alpha+1 } \right) }}{t} - \dfrac{{3\lambda {\left( {n+1} \right)\left( { n+ \alpha+1 } \right)\left( { 2 n+ \alpha  - \lambda +2 } \right)} }}{{{t^2}}}\no\\
&+O(t^{-3}),
\end{align}
which is (\ref{d22}) with $n\mapsto n+1$. Then, using (\ref{d21}) and (\ref{bn1}), we find from (\ref{an+1}) that
\begin{align}
\alpha_{n+1}(t)=&\:\alpha_{n}(t)+2-t\frac{d}{dt}\ln \beta_{n+1}(t)\no\\
=&\:2n + \alpha+3  + \dfrac{{\lambda\left( {2n +\alpha+ 3 } \right) }}{t} -\dfrac{\la\left[ 6 (n+1)^2+2(n+1) (3 \alpha - \lambda +3)+(\alpha +1) (\alpha -\lambda +2) \right]}{t^2} \no\\
&+O(t^{-3}),\no
\end{align}
which is (\ref{d21}) with $n\mapsto n+1$. Hence, the theorem is proved by induction.
\end{proof}
\begin{remark}
Recently, Clarkson and Jordaan \cite{Clarkson4} studied the generalized Airy polynomials and derived the long-time asymptotics of the recurrence coefficients. Their derivation is based on the Toda system satisfied by the recurrence coefficients, which is different from our problem.
\end{remark}
Next, we consider the long-time asymptotics of the sub-leading  coefficient $\mathrm{p}(n,t)$, the  Hankel determinant $D_n(t)$ and the normalized constant $h_n(t)$ from the above theorem.
\begin{theorem}
As $t  \to  + \infty$, the sub-leading coefficient $\mathrm{p}(n,t)$ has the asymptotic expansion
$$
   \mathrm{p}(n,t) = - n\left( {n + \alpha } \right) - \dfrac{{\lambda n \left( {n + \alpha } \right) }}{t} + \dfrac{{\lambda n \left( {n + \alpha } \right)\left( {2 n + \alpha  - \lambda } \right) }}{{{t^2}}}
   +O(t^{-3}).
$$
\end{theorem}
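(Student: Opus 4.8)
The plan is to reduce everything to the exact finite-$n$ identity (\ref{ex1}) derived in Section~2, namely
$$
\mathrm{p}(n,t) = H_n(t) - n(n+\al+\la),
$$
which expresses the sub-leading coefficient directly through the quantity $H_n(t) = t\,\frac{d}{dt}\ln D_n(t)$. Since the long-time expansion of $H_n(t)$ is already in hand from the preceding Theorem~\ref{th1t}, the asymptotics of $\mathrm{p}(n,t)$ should drop out immediately upon substitution.

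First I would insert the expansion (\ref{Phi2}),
$$
H_n(t) = \la n - \frac{\la n(n+\al)}{t} + \frac{\la n(n+\al)(2n+\al-\la)}{t^2} + O(t^{-3}),
$$
into the identity above. The only step requiring any care is the cancellation in the $O(1)$ term: combining the leading $\la n$ from $H_n(t)$ against $-n(n+\al+\la)$ gives $\la n - n^2 - n\al - n\la = -n(n+\al)$, which is exactly the claimed constant term. The coefficients of $t^{-1}$ and $t^{-2}$ are then inherited verbatim from (\ref{Phi2}), so no further computation is needed.

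I do not expect any genuine obstacle here, as Theorem~\ref{th1t} already carries the analytic weight of the argument (the induction built on the Kummer-function representation of $\mu_0(t)$). If a self-contained verification is desired, one could bypass (\ref{ex1}) and instead use the telescoping relation (\ref{sum}), $-\mathrm{p}(n,t) = \sum_{j=0}^{n-1}\al_j(t)$, summing the expansion (\ref{d21}) of $\al_j(t)$ order by order; the elementary sum $\sum_{j=0}^{n-1}(2j+\al+1) = n(n+\al)$ and its $\la$-weighted analogues reproduce all three coefficients, giving an independent confirmation of the stated expansion.
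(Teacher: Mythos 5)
Your proposal is correct and follows exactly the paper's own proof, which likewise obtains the result by combining the identity $\mathrm{p}(n,t)=H_n(t)-n(n+\alpha+\lambda)$ from (\ref{ex1}) with the long-time expansion (\ref{Phi2}) of $H_n(t)$; the cancellation $\lambda n - n(n+\alpha+\lambda) = -n(n+\alpha)$ you check is the only computation involved. The alternative verification via the telescoping sum (\ref{sum}) is a nice consistency check but is not needed.
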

\begin{proof}
The combination of (\ref{ex1}) and (\ref{Phi2}) gives the desired result.
\end{proof}
\begin{theorem}\label{thm}
As $t  \to  + \infty$, the Hankel determinant $D_n(t)$ has the asymptotic expansion
\be\label{dn3}
\ln D_n(t)=\lambda n \ln  t+\widetilde{C}(n)
 + \dfrac{{\lambda n  \left( {n + \alpha } \right) }}{t} - \dfrac{{\lambda n  \left( {n + \alpha } \right)\left( {2 n + \alpha  - \lambda } \right) }}{{2{t^2}}}+O(t^{-3}),
\ee
where the constant term $\widetilde{C}(n)$ is given by
$$
\widetilde{C}(n)=\ln \dfrac{G(n+1)G(n+\alpha+1)}{G(\alpha+1)}
$$
and  $G(\cdot)$  is the Barnes $G$-function \cite{ew}.
\end{theorem}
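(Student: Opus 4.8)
The plan is to recover $\ln D_n(t)$ by integrating the defining relation (\ref{Pin}), which reads $\frac{d}{dt}\ln D_n(t)=H_n(t)/t$, and then to pin down the resulting constant of integration by an independent evaluation of the leading order of $D_n(t)$ as $t\to+\infty$. First I would divide the expansion (\ref{Phi2}) of Theorem \ref{th1t} by $t$ to obtain
$$
\frac{d}{dt}\ln D_n(t)=\frac{\lambda n}{t}-\frac{\lambda n(n+\alpha)}{t^2}+\frac{\lambda n(n+\alpha)(2n+\alpha-\lambda)}{t^3}+O(t^{-4}).
$$
Integrating term by term produces $\lambda n\ln t+\frac{\lambda n(n+\alpha)}{t}-\frac{\lambda n(n+\alpha)(2n+\alpha-\lambda)}{2t^2}$, which is exactly the $t$-dependent part of (\ref{dn3}); since the remainder is $O(t^{-4})$, integrating it from $t$ to $+\infty$ gives an $O(t^{-3})$ contribution, which accounts for the stated error. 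This step is routine and determines $\ln D_n(t)$ up to an additive term $\widetilde{C}(n)$ that depends on $n$ (and a priori on $\alpha,\lambda$) but not on $t$.

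The crux is to identify $\widetilde{C}(n)=\lim_{t\to+\infty}\bigl[\ln D_n(t)-\lambda n\ln t\bigr]$, which integration alone cannot supply. I would compute it directly from the moment representation $D_n(t)=\det(\mu_{i+j}(t))_{i,j=0}^{n-1}$. Writing $(x+t)^\lambda=t^\lambda(1+x/t)^\lambda$ in $\mu_j(t)=\int_0^\infty x^{j+\alpha}e^{-x}(x+t)^\lambda\,dx$ and letting $t\to+\infty$, dominated convergence yields $\mu_j(t)=t^\lambda\bigl[\Gamma(j+\alpha+1)+o(1)\bigr]$. Factoring $t^\lambda$ out of each of the $n$ rows of the determinant then gives $D_n(t)=t^{\lambda n}\bigl[\det(\Gamma(i+j+\alpha+1))_{i,j=0}^{n-1}+o(1)\bigr]$, so that $\widetilde{C}(n)=\ln\det(\Gamma(i+j+\alpha+1))_{i,j=0}^{n-1}$. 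But this last determinant is precisely the classical Laguerre Hankel determinant, whose closed form was recalled in the proof of the previous theorem, namely $\frac{G(n+1)G(n+\alpha+1)}{G(\alpha+1)}$; this delivers the claimed value of $\widetilde{C}(n)$, and incidentally confirms that it is independent of $\lambda$.

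The main obstacle is exactly this determination of the constant: the only new analytic input beyond the earlier results is the $t\to+\infty$ behavior of the Hankel determinant itself. The one technical point to verify is the uniform domination needed to pass the limit inside each moment integral; splitting into the cases $\lambda\ge 0$ and $\lambda<0$, one bounds $(1+x/t)^\lambda$ by $(1+x)^\lambda$ and by $1$ respectively for $t\ge 1$, so that the integrand is dominated by $x^{j+\alpha}e^{-x}(1+x)^{\max(\lambda,0)}$, which is integrable. After that, continuity of the determinant as a function of its $n^2$ entries makes the reduction to the Barnes $G$-function evaluation immediate, and matching with the integrated expression completes the proof.
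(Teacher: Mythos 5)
Your argument is correct, and the first half (dividing the expansion (\ref{Phi2}) of $H_n(t)$ by $t$ and integrating, with the remainder handled by integrating from $t$ to $+\infty$) is exactly what the paper does. Where you genuinely diverge is in pinning down the integration constant $\widetilde{C}(n)$. The paper stays entirely inside its web of finite-$n$ identities: it substitutes the integrated ansatz into the Toda-type relation $\ln\beta_n(t)=\ln D_{n+1}(t)+\ln D_{n-1}(t)-2\ln D_n(t)$, compares constant terms against the expansion (\ref{d22}) of $\beta_n(t)$ to get the second-order difference equation $\widetilde{C}(n+1)+\widetilde{C}(n-1)-2\widetilde{C}(n)=\ln[n(n+\alpha)]$, reads off the initial data $\widetilde{C}(0)=0$, $\widetilde{C}(1)=\ln\Gamma(\alpha+1)$ from the explicit Kummer-function formula for $\mu_0(t)$, and solves. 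You instead compute $\widetilde{C}(n)=\lim_{t\to+\infty}[\ln D_n(t)-\lambda n\ln t]$ directly from the moment determinant: dominated convergence (with the correct case split on the sign of $\lambda$) gives $\mu_j(t)=t^{\lambda}[\Gamma(j+\alpha+1)+o(1)]$, row extraction of $t^{\lambda}$ and continuity of the determinant reduce everything to the classical Laguerre Hankel determinant $G(n+1)G(n+\alpha+1)/G(\alpha+1)$, which is strictly positive so the logarithm is harmless. Your route is more direct and delivers the constant for all $n$ in one stroke, needing only elementary real analysis on the moments; the paper's route avoids any limit interchange and reuses machinery ($\beta_n$ asymptotics, the relation (\ref{bd})) already in place, at the cost of solving a difference equation. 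Both are complete proofs; yours also makes the independence of $\widetilde{C}(n)$ from $\lambda$ transparent rather than emergent.
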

\begin{proof}
From (\ref{Phi2}) and in view of the fact that ${H _n}(t) = t\frac{d}{{dt}}\ln {D_n}(t)$, we have
\be\label{Dt2}
\ln D_n(t)=\lambda n \ln  t+\widetilde{C}(n)
 + \dfrac{{\lambda n  \left( {n + \alpha } \right) }}{t} - \dfrac{{\lambda n  \left( {n + \alpha } \right)\left( {2 n + \alpha  - \lambda } \right) }}{{2{t^2}}}+O(t^{-3}),
\ee
where $\widetilde{C}(n)$ is an integration constant, independent of $t$, to be determined. Using (\ref{bd}) we have
\be\label{lnb}
\ln \beta_n(t)=\ln D_{n+1}(t)+\ln D_{n-1}(t)-2\ln D_{n}(t).
\ee
By (\ref{d22}), it follows that
\be\label{beta3}
\ln \beta_n(t)=\ln \left[ {n\left( {n + \alpha } \right)} \right] + \dfrac{{2\lambda }}{t} - \dfrac{{\lambda \left( {  6n + 3\alpha  - \lambda } \right)}}{{{t^2}}} + O(t^{-3}).
\ee
Substituting (\ref{Dt2}) and (\ref{beta3}) into (\ref{lnb}) and comparing the constant terms on both sides, we find
\be\label{CC1}
\widetilde{C}(n+1)+\widetilde{C}(n-1)-2\widetilde{C}(n)=\ln \left[ {n\left( {n + \alpha } \right)} \right].
\ee
Since
$$
D_0(t)=1,\qquad D_1(t)={\mu _0}(t),
$$
we have
$$
\ln D_0(t)=0,\qquad \ln D_1(t)=\ln {\mu _0}(t).
$$
From (\ref{mu0}) we find as $t  \to  + \infty$,
$$
{\mu _0}(t)=t^{\lambda } \left[\Gamma (\alpha +1)+\frac{\lambda  \Gamma (\alpha +2)}{t}+\frac{\lambda(\lambda -1) \Gamma (\alpha +3)}{2 t^2}+\frac{\lambda(\lambda -1)(\lambda -2) \Gamma (\alpha +4)}{6 t^3}+O(t^{-4})\right].
$$
It follows that
\be\label{C10}
\widetilde{C}(0)=0,\qquad \widetilde{C}(1)=\ln \Gamma ( \alpha+1 ).
\ee
The second-order difference equation (\ref{CC1}) with the initial conditions (\ref{C10}) has a unique solution given by
$$
\widetilde{C}(n)=\ln \dfrac{G(n+1)G(n+\alpha+1)}{G(\alpha+1)},
$$
and we establish the theorem.
\end{proof}
\begin{remark}
It is interesting to find that the constant term in Theorem \ref{thm} is the same with the one in the asymptotics of the Hankel determinant for the generalized Airy weight as $t\rightarrow -\infty$ studied in \cite[Theorem 5.3]{F}.
\end{remark}
\begin{theorem}
As $t  \to  + \infty$, the normalized constant $h_n(t)$ has the asymptotic expansion
\begin{align}
\ln h_n(t)=&\:\lambda \ln t +\ln \left( \Gamma ( n+1) \Gamma (n+ \alpha+1 )\right)+ \frac{{\lambda\left( { 2 n+ \alpha+1 } \right) }}{t}\no\\[8pt]
&- \frac{{\lambda \left[6 n^2+2n (3 \alpha - \lambda +3) +(\alpha +1) (\alpha -\lambda +2)\right]}}{{2{t^2}}}+O(t^{-3}).\no
\end{align}
\end{theorem}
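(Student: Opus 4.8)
The plan is to reduce everything to the long-time expansion of $\ln D_n(t)$ already established in Theorem \ref{thm}. The starting point is the telescoping identity (\ref{hn}), namely $\ln h_n(t)=\ln D_{n+1}(t)-\ln D_n(t)$, which follows directly from the product formula (\ref{d_nt1}) and holds for every fixed $n$ and all $t>0$. Substituting the expansion (\ref{dn3}) for both $\ln D_{n+1}(t)$ and $\ln D_n(t)$ and subtracting, the result will fall out term by term, with each coefficient in the $t\to+\infty$ expansion of $\ln h_n(t)$ arising as a first difference in $n$ of the corresponding coefficient in (\ref{dn3}).

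Carrying this out, the logarithmic term contributes $\lambda(n+1)\ln t-\lambda n\ln t=\lambda\ln t$, giving the leading behaviour. For the inverse powers of $t$, I would difference the coefficient polynomials appearing in (\ref{dn3}). The $1/t$ coefficient comes from $\lambda\big[(n+1)(n+1+\alpha)-n(n+\alpha)\big]=\lambda(2n+\alpha+1)$, and the $1/t^2$ coefficient from $-\tfrac{\lambda}{2}\big[f(n+1)-f(n)\big]$ with $f(n)=n(n+\alpha)(2n+\alpha-\lambda)$; expanding $f$ as a cubic in $n$ and taking the discrete derivative yields $f(n+1)-f(n)=6n^2+2n(3\alpha-\lambda+3)+(\alpha+1)(\alpha-\lambda+2)$, which reproduces the stated $1/t^2$ term. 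These are purely mechanical polynomial manipulations.

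The only step requiring genuine input is the $t$-independent constant, which equals $\widetilde{C}(n+1)-\widetilde{C}(n)$. Here I would invoke the explicit form of $\widetilde{C}(n)$ from Theorem \ref{thm} together with the functional equation $G(z+1)=\Gamma(z)G(z)$ of the Barnes $G$-function. The ratios $G(n+2)/G(n+1)=\Gamma(n+1)$ and $G(n+\alpha+2)/G(n+\alpha+1)=\Gamma(n+\alpha+1)$ collapse the difference to $\ln\big(\Gamma(n+1)\Gamma(n+\alpha+1)\big)$, exactly the constant appearing in the statement. Since all remaining pieces are routine, this $G$-function simplification is the crux of the argument, though it is itself entirely standard once the closed form of $\widetilde{C}(n)$ is available.
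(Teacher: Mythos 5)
Your proposal is correct and coincides with the paper's own proof, which likewise obtains the result by substituting the long-time expansion (\ref{dn3}) of $\ln D_n(t)$ into the difference identity (\ref{hn}); the paper simply omits the explicit polynomial differencing and the Barnes $G$-function simplification that you spell out. All of your computed coefficients check out.
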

\begin{proof}
Substituting (\ref{dn3}) into (\ref{hn}) gives the desired result.
\end{proof}

In the end, we point out that the large $t$ asymptotic expansions of the quantity $H_n(t)= t\frac{d}{{dt}}\ln {D_n}(t)$ and the Hankel determinant $D_n(t)$ obtained in this section can be used to compute the mean, variance and higher order cumulants of the mutual information for single-user MIMO systems; see \cite[Section IV]{cm}. Moreover, we give a rigorous derivation of the large $t$ asymptotic expansion for the quantity $H_n(t)$ in Theorem \ref{th1t} using the mathematical induction and the relation to the recurrence coefficients of corresponding orthogonal polynomials, while Chen and McKay \cite[p. 4610]{cm} derived the special case ($\alpha=0$) by substituting an assumed series expansion into the Jimbo-Miwa-Okamoto $\sigma$-form of the Painlev\'{e} V equation in Theorem \ref{thm2}.

\section*{Acknowledgments}
This work was partially supported by the National Natural Science Foundation of China under grant number 12001212, by the Fundamental Research Funds for the Central Universities under grant number ZQN-902 and by the Scientific Research Funds of Huaqiao University under grant number 17BS402.

\section*{Conflict of Interest}
The authors have no competing interests to declare that are relevant to the content of this article.
\section*{Data Availability Statement}
Data sharing not applicable to this article as no datasets were generated or analysed during the current study.

\end{document}